\documentclass[11pt]{article}
\usepackage{latexsym}
\usepackage[pdftex]{graphicx}
\usepackage{amsfonts}
\usepackage{amsmath, amsthm, amssymb}
\usepackage{fullpage}
\usepackage{setspace}
\usepackage{longtable}
\usepackage{natbib}
\bibpunct{(}{)}{;}{a}{}{,}
\usepackage{dcolumn}
\newcolumntype{.}{D{.}{.}{-1}}
\newcolumntype{d}[1]{D{.}{.}{#1}}
\usepackage[top=.8in, left=.9in, right=.9in,
bottom=.8in]{geometry}
\usepackage{bm}
\usepackage[compact]{titlesec}
\usepackage{booktabs}

\usepackage{color}
\RequirePackage[colorlinks,citecolor=blue,urlcolor=blue]{hyperref}
\usepackage{amsthm}

\newtheorem{theorem}{Theorem}
\newtheorem{lemma}{Lemma}

\usepackage{caption}
\usepackage{subcaption}

\def\hat{\widehat}
\def\tilde{\widetilde}
\def\P{{\rm pr}}
\def\bX{{X}}
\def\bfW{{W}}
\def\bY{{Y}}
\def\bI{I}
\def\bfZ{{Z}}
\def\bfz{{z}}
\def\var{\mbox{var}}
\def\calA{{\mathcal D}}

\def\mI{{\cal I}}
\def\bfbeta{{\beta}}

\expandafter\def\expandafter\normalsize\expandafter{\normalsize\setlength\abovedisplayskip{0pt}}
\expandafter\def\expandafter\normalsize\expandafter{\normalsize\setlength\belowdisplayskip{0pt}}
\expandafter\def\expandafter\normalsize\expandafter{\normalsize\setlength\abovedisplayshortskip{0pt}}
\expandafter\def\expandafter\normalsize\expandafter{\normalsize\setlength\abovedisplayshortskip{0pt}}

\begin{document}
\pagestyle{plain}

\title{Inference on Average Treatment Effect under Minimization and Other Covariate-Adaptive Randomization Methods}
\author{Ting Ye\thanks{Department of Statistics, University of Pennsylvania} \and Yanyao Yi\thanks{Global Statistical Sciences, Eli Lilly and Company}  \and Jun Shao\thanks{School of Statistics, East China Normal University}}

\maketitle
\thispagestyle{empty}

\abstract{
	Covariate-adaptive randomization schemes such as the minimization and stratified permuted blocks are often applied in clinical trials to balance treatment assignments across  prognostic factors. The existing theoretical developments on inference after covariate-adaptive randomization are mostly limited to situations where a correct model between the response and covariates can be specified or the randomization method has well-understood properties.  Based on stratification with covariate levels utilized in randomization and a further adjusting for covariates not used in randomization, in this article we propose several estimators for model free  inference on average treatment effect  defined as the difference between response means under two treatments. We establish asymptotic normality of the proposed estimators under all popular covariate-adaptive randomization schemes including the minimization
whose theoretical property is unclear, and we show that the asymptotic distributions are invariant with respect to covariate-adaptive randomization methods. Consistent variance estimators are constructed for asymptotic inference.  Asymptotic relative efficiencies and finite sample properties of estimators   are also studied. We recommend  using one of our proposed estimators for valid and model free inference after covariate-adaptive randomization. 
}

\newcommand{\n}{\noindent}
{\bf Keywords:} 	Adjusting for covariates; 
Balancedness of treatment assignments; 
Efficiency; Generalized regression; Model free inference; Multiple treatment arms;  Stratification; Variance estimation. 

\clearpage

\section{Introduction}

Consider a clinical trial to compare $k$ treatments  with given treatment assignment proportions $\pi_1,\ldots,\pi_{k}$, where $k\geq 2$ is a fixed positive integer, $ \sum_{t=1}^k\pi_t=1 $, and $\pi_t$ can be any known number strictly between 0 and 1. In many trials patients are not all available for simultaneous assignment of treatments but rather 
arrive sequentially and must be treated immediately. Thus, 
simple randomization, which assigns patients to treatments completely at random,  may yield  sample sizes not following the assignment proportions across  prognostic factors or covariates, e.g., institution, disease stage, prior treatment, gender, and age, which are thought to have significant influence on the responses of interest.  
A remedy is to apply covariate-adaptive randomization, i.e., treatment assignment of the $i$th  patient depends on the observed covariate value of this patient and the assignments and covariate values of all $i-1$ previously assigned  patients. 
In this article, we focus on  enforcing  assignment allocation across 
levels of a covariate vector $\bfZ$  whose  components are discrete or discretized continuous covariates. There are model-based approaches of balancing discrete or continuous covariates for estimation efficiency  \citep{Atkinson:1982aa,Atkinson:1999aa,Atkinson:2002aa,Rosenberger:2008aa,Senn:2010aa,Baldi-Antognini:2011aa}, which  are not  further considered in this article.
The oldest  method of balancing covariates is the minimization  \citep{Taves:1974aa} intended to 
balance treatment assignments across marginal levels of $Z$: 
it  assigns the $i$th  patient by minimizing a weighted sum of squared or absolute differences between the numbers of patients, up to the $i$th, 
assigned to treatments over marginal levels of $\bfZ$. 
\cite{Pocock:1975aa} extended 
Taves' procedure to 
achieving minimization with a given probability, 
which is still referred to as the minimization method.  
Other popular covariate-adaptive randomization methods include 
the stratified permuted block randomization  \citep{Zelen:1974aa}, the 
stratified biased coin \citep{Shao:2010aa,Kuznetsova:2017aa}, and
the stratified urn design \citep{Wei:1977aa,Zhao:2016aa}. 
See  \cite{Schulz:2018aa} and \cite{Rosenberger:2008aa} for nice summaries.
As pointed out in \cite{Taves:2010aa},  from 1989 to 2008, over 500 clinical trials  implemented the minimization method to balance important covariates, despite some criticisms by  \cite{Smith:1984aa} and \cite{Senn:2010aa}.  According to a recent review of nearly 300 clinical trials published in  2009 and 2014  \citep{Ciolino:2019aa}, 237 of them used covariate-adaptive randomization.


Although data are collected under covariate-adaptive randomization, conventional inference procedures constructed based on simple randomization are often applied in practice. 
This has raised concerns because statistical inference on  treatment effects should be made using procedures valid under the particular randomization scheme used in data collection. Applying conventional inference after covariate-adaptive randomization may lead to invalid results, especially for the minimization because its theoretical property remains largely unclear. In its 2015  guideline, \cite{ema:2015aa} raised concerns and specifically pointed out that ``possible implications of dynamic allocation methods [minimization] on the analysis e.g. with regard to bias and Type I error control should be carefully considered, \ldots conventional statistical methods do not always control the Type I error''.

Starting with \cite{Shao:2010aa}, there has been significant progress in understanding the theoretical properties of statistical tests under 
covariate-adaptive randomization, e.g.,  
\cite{Hu:2012aa},
\cite{Shao:2013aa}, \cite{Ma:2015aa}, \cite{Bugni:2017aa,Bugni:2019aa}, \cite{Ye:2018aa},   and \cite{Ye:2020aa}.  Another important stream of statistical inference methods is based on permutation tests or re-randomization inference, e.g., \cite{Simon:2011aa}, \cite{Kaiser:2012aa}, and \cite{Bugni:2017aa}.
However, except for \cite{Bugni:2019aa},  all theoretical results are established  under the assumption that  either a correct model between the responses of interest  and covariates  is available or  the covariate-adaptive randomization procedure has a well-understood property, described as  type 1 or type 2 later in \S2 of the current paper. 
It should be noted  that  model misspecification often occurs, especially when there are many covariates, and
the minimization method is neither type 1 nor type 2. 
The minimization  is applied in practice very often 
\citep{Pocock:1975aa}, mainly because  
it  aims to minimize the imbalance across marginal levels of $\bfZ$, not every joint level of $\bfZ$, which is sufficient in many applications. 
Enforcing treatment balance in every joint level of $\bfZ$ may cause sparsity of data when  the dimension of $\bfZ$ is not small. 

To fill the gap, in this paper we propose  asymptotically valid inference on the average treatment effect defined as the difference between population response means of every treatment pair, under covariate-adaptive randomization including  minimization.
Our main idea is to apply stratification according to the  levels of discrete $\bfZ$ and 
to  adjust for covariates not used in treatment randomization through generalized regression for improving efficiency.
Our estimator without adjusting  for covariates, 
which is not the most efficient one, coincides with the estimator  derived under a different approach in the recent publication \citep{Bugni:2019aa}.  
Asymptotic normality of the proposed treatment effect estimators is established with explicit limiting variance formulas that can be used for inference as well as comparing relative efficiencies.   Our results are not only model-free,  
i.e.,  only the existence of second-order moments of the responses and covariates are required,  but also invariant with respect to covariate-adaptive  randomization schemes, i.e., the same inference procedure can be applied under any 
covariate-adaptive randomization. 
We also study and compare inference procedures by simulations
and illustrate our method in a real data example.
R codes for the methods proposed in this paper can be found in the R package RobinCar  posed at \texttt{https://github.com/tye27/RobinCar}.


\section{Preliminaries}


Let $\bI$  be the treatment indicator vector, which equals $ e_t $ if the patient is assigned to treatment $t$, $t=1,\ldots,k$, where $ e_t $ denotes a vector with only the $ t $th component being 1 while the rest being 0. Let 
$Y^{(t)}$  be the potential response under treatment $t$,  $\bfW$ be a  vector of all observed covariates, and 
$\bfZ$ be a discrete function of $\bfW$ utilized in covariate-adaptive randomization.  
For patient $i$, 
let $\bI_i$, $ \bfW_i $, and $\bY^{(t)}_i, t=1,\ldots, k$,  be realizations of $\bI$, $\bfW$, and $\bY^{(t)}$, $t=1,\ldots, k$, respectively, where $i=1,\ldots,n$, and  $n$ is the total number of patients in all treatment arms. 
For every patient $i$, $\bI_i$ is generated after $\bfZ_i$ is observed,   and only  the potential response from the treatment indicated by $\bI_i$ is observed, i.e., we observe  $Y_i = Y_i^{(t)}$ if and only if $I_i = e_t$.

After all treatment assignments are made and responses are collected, we would like to make inference based on the observed data $\{ \bfW_i, \bI_i, Y_i,  i=1,\ldots,n \}$. 
In this paper, we consider inference on the average treatment effect between treatments $t$ and $s$, \vspace{-1mm}
\begin{equation}
\theta =E(Y^{(t)} -Y^{(s)} ), \label{theta} 
\end{equation}
where $E$ denotes the population expectation and $s \neq t$ are given integers between $ 1 $ and $k$. Note that $\theta$ in (\ref{theta}) depends on $t$ and $s$, but a subscript is omitted for simplicity, as we mainly consider the average treatment effect for two fixed treatment arms. 

For our inference procedure studied in \S 3, we describe some minimal conditions. 
The first one is about the population for potential responses and covariates. 
\begin{description}
	\item
	(C1)  
	$(\bY_i^{(1)},\ldots, \bY_i^{(k)}, \bfW_i)$, $i=1,\ldots, n$, are independent and identically distributed as\\ $(\bY^{(1)},\ldots, \bY^{(k)}, \bfW)$ and $ Y^{(t)} $ has finite second-order moments, $ t=1,\ldots, k $.
\end{description}
This condition  is model-free: there is no assumption on the relationship between $\bfW$ and the potential response $Y^{(t)}$ that 
may be  continuous or discrete. 


Under simple randomization, $\bI_i$'s are independent of $ (\bY_i^{(1)},\ldots, \bY_i^{(k)}, \bfW_i)$'s and are independent and identically distributed with $\P(\bI_i =e_t)= \pi_t$.  
To enforce assignment proportions at each joint level of $\bfZ$ treated as stratum, 
three popular covariate-adaptive randomization schemes are the stratified permuted block randomization method \citep{Zelen:1974aa}, the stratified 
biased coin method \citep{Shao:2010aa,Kuznetsova:2017aa}, and 
the stratified urn design \citep{Wei:1977aa,Zhao:2016aa}.

The minimization \citep{Taves:1974aa, Pocock:1975aa,Han:2009aa} is the same as the stratified biased coin method if $\bfZ$ is one-dimensional, but is very different from the above three stratification methods with  a multivariate $\bfZ$.  
It aims to enforce assignment ratio across marginal levels of $\bfZ$, not every stratum defined by the joint level of $\bfZ$. 
Assignments are made  by minimizing a weighted sum of squared or absolute differences between the numbers of  patients assigned to treatment arms across marginal levels of $\bfZ$. 
Because only marginal levels of $\bfZ$ are considered in minimization, 
this method is also called the marginal method in   \cite{Ma:2015aa}  and \cite{Ye:2020aa}. 

We assume the following minimal conditions for covariate-adaptive randomization. 
\begin{description}
	\item
	(C2)
	$(\bI_i, i=1,\ldots,n)$ and $(\bY_i^{(1)},\ldots, \bY_i^{(k)}, \bfW_i , i=1,\ldots,n)$ are conditionally independent given $\bfZ_1,\ldots,\bfZ_n$. 
	\item
	(C3) 
	$\bfZ$ is discrete with finitely many levels given in a set ${\cal Z}$. For each $t=1,\ldots,k$, $\P(\bI_i=e_t \mid\bfZ_1, \ldots, \bfZ_n)=\pi_t$ for $i=1,\ldots, n$, and  $\{n(\bfz)\}^{-1}D_t(\bfz)$ converges to 0 in probability as $n\rightarrow\infty$ for every $\bfz \in {\cal Z}$, where $n(\bfz)$  is the  number of patients with $\bfZ_i = \bfz$, and $D_t(\bfz)= n_t(z) - \pi_t n(\bfz )$ with $n_t(z)$ being the number of patients with $\bfZ_i = \bfz$ assigned to treatment $t$.  
\end{description}
Condition (C2) is reasonable because (i) given  $\bfZ_i$'s, $\bfW_i$'s  contain covariates not used in randomization, and (ii) treatment assignments do not affect the  potential responses, although they do affect the observed responses $Y_i$'s.  
Condition (C3) holds for most covariate-adaptive randomization schemes
\citep{Baldi-Antognini:2015aa}, certainly for all schemes considered in this paper, 
the minimization and three stratified designs, the permuted block,
biased coin, and urn design.  


We classify 
all covariate-adaptive randomization methods into the following three types  in terms of  $D_t (\bfz )$  defined in 
(C3). 

\begin{description}
	\item []Type 1.  For every $t$ and $\bfz$, $\{n(\bfz )\}^{-1/2} D_t (\bfz ) \to 0$ in probability as $n\rightarrow\infty$. 
	\item []Type 2.  For every $t$, $D_t(\bfz )$, $\bfz \in {\cal Z}$, are independent and, for every $t$ and $\bfz $, $\{n(\bfz )\}^{-1/2} D_t(\bfz )\xrightarrow{d}  N(0, v_t)$ with a known $v_t>0$, where $\xrightarrow{d} $ denotes convergence in distribution as $n \to \infty$. 
	\item []Type 3. Methods not  in type 1 or 2. 
\end{description}

The three types are  defined based on their degree in enforcing the balancedness according to the given assignment proportions within every joint level of $\bfZ$.  Type 1 is the strongest, since $D_t(\bfz)$ measures the imbalance of assignments within stratum $\bfz$. The property 
$\{n(\bfz )\}^{-1/2} D_t(\bfz )\to 0$ in probability is stronger than $\{n(\bfz)\}^{-1}D_t(\bfz)\to 0$ in probability in (C3). 
Type 2 is  weaker than type 1 in
enforcing the balancedness, as it requires $\{n(\bfz )\}^{-1/2} D_t(\bfz )$ converging in distribution, not in probability to 0, although it is still stronger than  
$\{n(\bfz)\}^{-1}D_t(\bfz)\to 0$ in probability. 

Representatives of type 1 methods are  
stratified permuted block randomization and stratified biased coin methods. 
Specifically, under stratified permuted block randomization,  $D_t(\bfz)$  is bounded by the maximum block size. For the stratified  biased coin method, it follows from a result in \cite{Efron:1971aa} that $D_t(\bfz)$ is bounded in probability.
The  stratified urn design is type 2 with $v_t= 1/12$  when $k=2$ and  $\pi_1=\pi_2 = 1/2$ \citep{Wei:1978ab}.  Simple randomization treated as a special case of covariate-adaptive randomization is also type 2.  
Finally, the minimization is type 3, since it is neither type 1 nor type 2 \citep{Ye:2020aa}. Specifically, under minimization,  $D_t(\bfz )$ and $D_t(\bfz')$ with $\bfz \neq \bfz'$ are not independent, and their relationship is complicated,  
because assignments are made according to marginal levels of $\bfZ$. 

For type 1 methods, some theoretical results in statistical testing have been established; see, for example,   
\cite{Shao:2010aa},  \cite{Shao:2013aa},  \cite{Ye:2018aa},   \cite{Bugni:2017aa, Bugni:2019aa}, 
and \cite{Ye:2020aa}. \cite{Bugni:2017aa, Bugni:2019aa} and \cite{Ye:2020aa}
also  considered type 2 methods. 
In the next section, we propose inference procedures and establish their asymptotic validity under general covariate-adaptive randomization including minimization.

\section{Inference on Average Treatment Effect}


To make asymptotically valid inference on $\theta$ defined in (\ref{theta}), 
the key is to construct an estimator of $\theta$ and derive its asymptotic distribution. 
Under simple randomization, the simplest estimator is the response mean difference  $\bar{Y}_t-\bar{Y}_s$, where $\bar{Y}_t$ is the sample mean of responses under treatment $t$. Although $\bar{Y}_t-\bar{Y}_s$ is asymptotically normal under type 1 or 2 covariate-adaptive randomization, it is generally not efficient as covariate information is not utilized in estimation. More seriously, 
the asymptotic distribution of $\bar{Y}_t-\bar{Y}_s$ is not known under type 3 
covariate-adaptive randomization such as the minimization. 
\cite{Bugni:2017aa}  derived a different estimator of $\theta$, called the strata fixed effect estimator in their \S4.2, but its asymptotic normality is established only for type 1 or 2 covariate-adaptive randomization and, thus, it cannot be used  under type 3 randomization such as the minimization. 

Let $\bar{Y}_t (\bfz )$ be the sample mean of $Y_i$'s with $\bfZ_i = \bfz$ under treatment $t$. 
The following stratified  response mean differences with strata being all joint levels of $\bfZ$ is proposed in expression (8) of \cite{Bugni:2019aa}, 
\begin{eqnarray}
\hat{\theta} =\sum_{\bfz \in {\cal Z}} 
\frac{n (\bfz)}{n}\{ \bar{Y}_t(\bfz)-\bar{Y}_s(\bfz)\}, \label{eq: theta hat}
\end{eqnarray}
although  \cite{Bugni:2019aa}
provided $\hat\theta$ in a different form derived under a fully saturated linear regression. 
If the weight $n(\bfz )/n$ in (\ref{eq: theta hat}) is replaced by the population weight  $\P( \bfZ = \bfz )$, then $\hat\theta$ is exactly the stratified estimator in survey sampling. 
We use $n(\bfz )/n$ in (\ref{eq: theta hat})
as $\P( \bfZ = \bfz )$ is unknown.

Although $\hat\theta$ in (\ref{eq: theta hat}) utilizes information from $\bfZ$ by stratification and is asymptotically more efficient than 
the simple $\bar{Y}_t-\bar{Y}_s$ or the strata fixed effect estimator in \cite{Bugni:2017aa}, 
it does not make use of covariate information in $\bfW$ but not in $\bfZ$.
Note that $\bfW$ may contain  components that are not in $\bfZ$ but are related with the potential responses $ Y^{(t)}, t=1,\ldots, k $, or some components of $\bfZ$ are discretized components of $\bfW$ and the remaining information after discretization is still predictive of $ Y^{(t)}, t=1,\ldots, k $.  


Let $\bX$ be a function of $\bfW$ that we want to further adjust for. 
We now consider improving $\hat\theta$ in (\ref{eq: theta hat}) by utilizing $\bX$. 
To maintain model free estimation, we do not impose any  model between $ Y^{(t)}$ and $\bX$, but adjust for covariate $\bX$ within each $\bfZ = \bfz$ by applying the  generalized regression approach in survey sampling, first discussed in \cite{CASSEL:1976aa}
and studied extensively in the literature, for example,   \cite{Sarndal:2003aa}, 
\cite{Lin:2013aa}, \cite{Shao:2014aa}, and \cite{Ta:2020aa}.   Since this approach  is model-assisted but not model-based, i.e., a model is used to derive efficient estimators that are still asymptotically valid  even if the model is incorrect,  it suits our purpose 
of utilizing covariates without modeling. 

Let $\bX_i$ be the value of covariate $ \bX $ for patient $i$,
$\bar{\bX}_t(\bfz )$ be the sample mean of $\bX_i$'s  with $Z_i=z$ under treatment $ t $, and \vspace{-2mm}
\[ \hat{\beta}_t(\bfz ) = \left[\sum_{ i: I_i=e_t, Z_i=z } \{\bX_i - \bar{\bX}_t(\bfz )\}\{\bX_i - \bar{\bX}_t(\bfz )\}^T
\right]^{-1}  \sum_{ i: I_i=e_t, Z_i=z} \{\bX_i - \bar{\bX}_t(\bfz )\} Y_i,  \vspace{-1mm} \]
where $a^T$ is the  transpose of  vector $a$.  
Within treatment $t$ and $\bfZ = \bfz$, $ \hat{\bfbeta}_t(\bfz ) $ is 
the least squares estimator of the coefficient vector in front of $\bX$ under a linear  model between $Y^{(t)}$ and $\bX$, but the model is not required to be  correct. 
Then, our first proposed estimator of $\theta$ after adjusting for covariates is 
\begin{equation}
\hat\theta_{A} =\sum_{\bfz \in {\cal Z}} 
\frac{n (\bfz)}{n}\{ \bar{Y}_{t,A}(\bfz)-
\bar{Y}_{s,A}(\bfz)\}, \qquad 
\bar{Y}_{t,A} (\bfz ) = \bar{Y}_t(\bfz ) -  \{\bar{\bX}_t(\bfz ) - \bar{\bX}(\bfz ) \}^{T} \hat{\bfbeta}_t(\bfz ), \label{adj1}
\end{equation}
where 
$\bar{\bX} (\bfz )$ is the sample mean of $\bX_i$'s of all patients with $\bfZ_i = \bfz$. 


Within $\bfZ_i = \bfz$, 
if we assume that the linear models under all treatments have the same  coefficient vector for $X$,  such as  a homogeneous ANCOVA model with  covariate vector $\bX$, then we  can replace $\hat\beta_t(\bfz)$ by  \vspace{-1mm}
\[ \hat{\bfbeta} (\bfz ) = \left[\sum_{t=1}^{k} \ \sum_{ i: I_i=e_t, Z_i=z } \{\bX_i - \bar{\bX}_t(\bfz )\}\{\bX_i - \bar{\bX}_t(\bfz )\}^{T}
\right]^{-1} \sum_{t=1}^{k} \ \sum_{ i: I_i=e_t, Z_i=z } \{\bX_i - \bar{\bX}_t(\bfz )\} Y_i.   \vspace{-1mm}\]
Again, the model is not required to be correct in order to use $\hat{\bfbeta} (\bfz ) $.  This leads to an alternative estimator of $\theta$ after adjusting for covariates, 
\begin{equation}
\hat\theta_{B}   =\sum_{\bfz \in {\cal Z}} 
\frac{n (\bfz)}{n}\{ \bar{Y}_{t,B}(\bfz)-\bar{Y}_{s,B}(\bfz)\}, \qquad 
\bar{Y}_{t,B} (\bfz ) = \bar{Y}_t(\bfz ) -  \{\bar{\bX}_t(\bfz ) - \bar{\bX}(\bfz ) \}^{T} \hat{\bfbeta} (\bfz ).  \label{adj2}
\end{equation}

When $k >2$, both $\bar{X}(z)$ and $\hat\beta (z)$ involve data from patients in treatment arms other than $t$ and $s$.

The following theorem proved in the Supplementary Material derives the asymptotic distributions of $\hat\theta$ in (\ref{eq: theta hat}), 
$\hat\theta_{A}$ in (\ref{adj1}), and $\hat\theta_{B}$ in (\ref{adj2}), under covariate-adaptive randomization including minimization.  

\begin{theorem} \label{theo1}
	Assume (C1)-(C3), the existence of second-order moments of $ X$, $XY^{(t)}, t=1,\ldots, k $, and that ${\rm var} (\bX \mid \bfZ= \bfz )$ is positive definite for every $\bfz \in {\cal Z}$. As $n \to \infty$, 
	\begin{align*}
	&		\sqrt{n} (\hat{\theta}-\theta) \ \xrightarrow{d}  \ N(0, {\sigma}_{U}^2+\sigma_V^2), \\ 
	&\sqrt{n} (\hat{\theta}_A -\theta)\xrightarrow{d}  N(0, {\sigma}_{A}^2+\sigma_V^2), \\
	& \sqrt{n} (\hat{\theta}_B -\theta)\xrightarrow{d}  N(0, {\sigma}_{B}^2+\sigma_V^2), 
	\end{align*}\vspace{-1mm}
	where 	 \vspace{-1mm}
	\begin{align*}
	\sigma_U^2= & \, E\{ {\rm var} (Y^{(t)}\mid\bfZ)/\pi_t+{\rm var} (Y^{(s)}\mid\bfZ)/\pi_s\}, \\
	{\sigma}_{A}^2 = & \, E[{\rm var}\{Y^{(t)} - \bX^{T} \bfbeta_t(\bfZ) \mid\bfZ\}/\pi_t + {\rm var}\{ Y^{(s)}- \bX^{T}\bfbeta_s(\bfZ) \mid\bfZ\}/ \pi_s] \\
	& + E[\{ \bfbeta_t(\bfZ)-\bfbeta_s(\bfZ)\}^{T} {\rm var}(\bX\mid\bfZ)
	\{ \bfbeta_t(\bfZ)-\bfbeta_s(\bfZ)\}],\\
	{\sigma}_{B}^2= & \, E[ {\rm var}\{Y^{(t)} - \bX^{T} \bfbeta (\bfZ) \mid\bfZ\}/\pi_t + {\rm var}\{Y^{(s)}- \bX^{T} \bfbeta (\bfZ) \mid\bfZ\}/ \pi_s], \\
	\sigma_V^2= & \, {\rm var}\{ E(Y^{(t)}-Y^{(s)}\mid\bfZ)\},  \vspace{-1mm}
	\end{align*}
	$t, s =1,\ldots,k$,	$	 \bfbeta_t(\bfz)=  \{{\rm var}(\bX\mid\bfZ=\bfz)\}^{-1}
	{\rm cov}(\bX, Y^{(t)}\mid\bfZ=\bfz) $,  and $\bfbeta(\bfz)= \sum_{t=1}^k \pi_t\bfbeta_t(\bfz)$. 
\end{theorem}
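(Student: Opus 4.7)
The plan is to prove all three asymptotic normality statements by the same template. Let $\mu_t(\bfz)=E(\bY^{(t)}\mid\bfZ=\bfz)$, $\mu_X(\bfz)=E(\bX\mid\bfZ=\bfz)$, and $r_i^{(t)}=\bY_i^{(t)}-\mu_t(\bfZ_i)$. For $\hat\theta$, I would first decompose
\[
\hat\theta-\theta=A_n+B_n,\qquad B_n=\frac{1}{n}\sum_{i=1}^n\{\mu_t(\bfZ_i)-\mu_s(\bfZ_i)\}-\theta,
\]
where $A_n$ collects the within-stratum averages of $r_i^{(t)}$ and $r_i^{(s)}$. The critical algebraic step is to use $n_t(\bfz)=\pi_t n(\bfz)+D_t(\bfz)$ with $D_t(\bfz)/n(\bfz)=o_p(1)$ from (C3) to replace the random weight $n(\bfz)/\{n\cdot n_t(\bfz)\}$ by $(n\pi_t)^{-1}$; the remainder, bounded by $|D_t(\bfz)|\cdot|\bar r_t(\bfz)|/n$ and summed over finitely many $\bfz\in{\cal Z}$, is $o_p(n^{-1/2})$ because $\bar r_t(\bfz)=O_p(n(\bfz)^{-1/2})$ and $D_t(\bfz)=o_p(n(\bfz))$. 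This single reduction is what makes the proof insensitive to the type of covariate-adaptive randomization, including minimization, because no distributional property of $D_t(\bfz)$ beyond $o_p(n(\bfz))$ is ever used.

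After the reduction, $\sqrt n A_n=(\sqrt n \pi_t)^{-1}\sum_{i:\bI_i=e_t}r_i^{(t)}-(\sqrt n \pi_s)^{-1}\sum_{i:\bI_i=e_s}r_i^{(s)}+o_p(1)$, and I would apply a conditional CLT given $(\bfZ_1,\dots,\bfZ_n,\bI_1,\dots,\bI_n)$. By (C1)--(C2) the residuals are conditionally independent across patients and across arms, with $r_i^{(t)}$ having the law of $\bY^{(t)}-\mu_t(\bfz)$ given $\bfZ=\bfz$; the Lindeberg condition follows from the second-moment assumption, and the conditional variance converges in probability to $\sigma_U^2$ because $n_t(\bfz)/n\to\pi_t\P(\bfZ=\bfz)$ in probability. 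Since $E(A_n\mid\bfZ_1,\dots,\bfZ_n)=0$, a characteristic-function computation
\[
E\bigl[\exp\{iu\sqrt n A_n+iv\sqrt n B_n\}\bigr]=E\bigl[e^{iv\sqrt n B_n}\cdot E\{e^{iu\sqrt n A_n}\mid \bfZ,\bI\}\bigr]\to e^{-u^2\sigma_U^2/2}\cdot e^{-v^2\sigma_V^2/2}
\]
combined with $\sqrt n B_n\xrightarrow{d}N(0,\sigma_V^2)$ from the ordinary i.i.d.\ CLT for $\mu_t(\bfZ_i)-\mu_s(\bfZ_i)-\theta$, shows $A_n$ and $B_n$ are asymptotically independent and gives the stated limit for $\hat\theta$.

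For $\hat\theta_A$ and $\hat\theta_B$, the plan is first to show $\hat{\bfbeta}_t(\bfz)\to\bfbeta_t(\bfz)$ and $\hat{\bfbeta}(\bfz)\to\bfbeta(\bfz)$ in probability by a within-stratum within-arm LLN, using that conditionally on $(\bfZ,\bI)$ the pairs $(\bX_i,\bY_i^{(t)})$ with $\bI_i=e_t,\bfZ_i=\bfz$ are i.i.d.\ with the law of $(\bX,\bY^{(t)})\mid\bfZ=\bfz$, $n_t(\bfz)\to\infty$ in probability, and ${\rm var}(\bX\mid\bfZ=\bfz)$ is positive definite. Replacing $\hat{\bfbeta}_t(\bfz)$ by $\bfbeta_t(\bfz)$ in (\ref{adj1}) changes $\hat\theta_A$ by $\sum_{\bfz}\{n(\bfz)/n\}\{\bar\bX_t(\bfz)-\bar\bX(\bfz)\}^{T}\{\hat{\bfbeta}_t(\bfz)-\bfbeta_t(\bfz)\}=o_p(n^{-1/2})$ because the two factors are $O_p(n(\bfz)^{-1/2})$ and $o_p(1)$. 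After this oracle substitution the same decomposition applies with $r_i^{(t)}$ replaced by $\tilde R_i^{(t)}=\bY_i^{(t)}-\mu_t(\bfZ_i)-\{\bX_i-\mu_X(\bfZ_i)\}^{T}\bfbeta_t(\bfZ_i)$, but an additional covariate-outer term $n^{-1}\sum_i\{\bX_i-\mu_X(\bfZ_i)\}^{T}\{\bfbeta_t(\bfZ_i)-\bfbeta_s(\bfZ_i)\}$ appears from splitting $\bar\bX_t(\bfz)-\bar\bX(\bfz)=\{\bar\bX_t(\bfz)-\mu_X(\bfz)\}-\{\bar\bX(\bfz)-\mu_X(\bfz)\}$. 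This term contributes exactly the extra variance $E[\{\bfbeta_t(\bfZ)-\bfbeta_s(\bfZ)\}^{T}{\rm var}(\bX\mid\bfZ)\{\bfbeta_t(\bfZ)-\bfbeta_s(\bfZ)\}]$ in $\sigma_A^2$, and is asymptotically orthogonal to the residual piece because ${\rm cov}\{\bX-\mu_X(\bfZ),\tilde R^{(t)}\mid\bfZ\}=0$ by the defining equation for $\bfbeta_t(\bfz)$. For $\hat\theta_B$, the common $\bfbeta(\bfz)$ makes the analogous difference zero, so no extra term appears and only $\sigma_B^2$ remains.

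The hardest part I expect is the $o_p(n^{-1/2})$ control on the weight replacement in the first step, since this is the only place the detailed behavior of $D_t(\bfz)$ under minimization could have caused trouble; everything else is a conditional Lindeberg CLT plus a standard i.i.d.\ CLT. The advantage of the conditioning approach is that the joint distribution of $\{D_t(\bfz):\bfz\in{\cal Z}\}$ under minimization, which is intractable to describe explicitly, becomes irrelevant once the $Z$'s and $I$'s are fixed.
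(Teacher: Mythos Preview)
Your proposal is correct and mirrors the paper's proof: the same $U+V$ decomposition (your $A_n+B_n$), a conditional Lindeberg CLT given $(\bfZ_1,\ldots,\bfZ_n,\bI_1,\ldots,\bI_n)$ for the within-stratum piece, an i.i.d.\ CLT for $B_n$, the oracle substitution $\hat\beta_t(\bfz)\to\beta_t(\bfz)$ via a within-stratum LLN (the paper's Lemma~1), and for $\hat\theta_A$ the further split into the orthogonal $\tilde R$-residual and $X$-only pieces (the paper's $U_{A1}+U_{A2}$). Your preliminary replacement of the random weight $n(\bfz)/\{n\,n_t(\bfz)\}$ by $(n\pi_t)^{-1}$ and your explicit characteristic-function argument for joint normality are minor reorganizations---the paper instead keeps the random weights inside the conditional variance and appeals to $E(UV)=E\{VE(U\mid\calA)\}=0$, which is implicitly the same conditioning trick since $V$ is $\calA$-measurable.
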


Theorem 1 is  model free and is applicable to  any covariate-adaptive randomization method satisfying (C2)-(C3), most noticeably the  minimization for which very little is known about its theoretical property, as the minimization is neither type 1 nor type 2 described in \S2. This provides a solid foundation for valid and model free inference after minimization. 

The asymptotic result in Theorem 1 is invariant with respect to randomization methods, i.e., $\sigma_U^2$, $\sigma_A^2$, $\sigma_B^2$, and $\sigma_V^2$ do not depend on the randomization scheme. In other words, each estimator of $\theta$ in (\ref{eq: theta hat})-(\ref{adj2}) has the same asymptotic distribution and efficiency regardless of which randomization scheme is used for treatment assignments, including simple randomization. 
This is intrinsically different from many existing results that  are  dependent on randomization methods
\citep{Shao:2013aa,Ma:2015aa,Bugni:2017aa}.
The only result  invariant  with respect to randomization methods can be found in the literature is in \cite{Bugni:2019aa} for $\hat\theta$ in (\ref{eq: theta hat}), although it does not explicitly state this invariance property. 

Due to the use of covariate-adaptive randomization, the sample mean $\bar{Y}_t$ is not an average of independent random variables and, thus, the  asymptotic distributions of estimators in (\ref{eq: theta hat})-(\ref{adj2}) cannot be obtained by directly applying the central limit theorem for sum of independent random variables. We overcome this difficulty by decomposing $\hat{\theta}-\theta$  as the sum of the following two uncorrelated terms,\vspace{-1mm}
\begin{align*}
U= &\sum_{\bfz \in {\cal Z}} \frac{n(\bfz)}{n}\left[ \{ \bar{Y}_t(\bfz)-\bar{Y}_s(\bfz)\} -\{ E(Y^{(t)}\mid\bfZ=\bfz) -E(Y^{(s)}\mid\bfZ=\bfz) \}\right],\\
V= &\sum_{\bfz \in {\cal Z}} \frac{n(\bfz)}{n} \{ E(Y^{(t)}\mid\bfZ=\bfz) -E(Y^{(s)}\mid\bfZ=\bfz) \} -\theta.
\end{align*}
Conditioned on $(I_1,\ldots, I_n, Z_1,\ldots,Z_n)$, $U$ is an average of independent terms so its limiting distribution can be derived by applying the central limit theorem, which consequently provides the unconditional asymptotic distribution of $U$. For $V$, the only random part is $n(\bfz )$ whose limiting distribution can be easily derived. 
For $\hat\theta_A $ or $\hat\theta_B $, a similar decomposition can be obtained with the same $V$ and a different $U$ incorporating the covariate adjustment term.  Details can be found in the Supplementary Material.

This decomposition  is not only   the key to establishing the 
asymptotic result, but also identifies two sources of variation.
The variation of potential responses $Y^{(t)}$ and $Y^{(s)}$ explained by $\bfZ$ is represented by $\sigma_U^2$. 
The variation from  treatment effect heterogeneity is measured by 
$\sigma_V^2$.
Note that  we allow arbitrary treatment effect heterogeneity, i.e.,  different 
subgroups according to levels of $\bfZ$ may benefit differently from the treatment. 
If there is no treatment effect heterogeneity, then  	$\sigma_V^2=0$.

The asymptotic relative efficiencies among $\hat\theta$, $\hat\theta_A$ and $\hat\theta_B$ are summarized in the following result. 

\begin{theorem} \label{theo2} Under the same assumptions  in Theorem 1, for $\sigma^2_U$, $\sigma^2_A$, $\sigma^2_B$, $\bfbeta_t (\bfz )$ and $\bfbeta (\bfz )$ defined in Theorem 1, we have 
	\begin{align*}
	\sigma_U^2 - \sigma_A^2   = & \,	 E\left[  \{\pi_s \bfbeta_t(\bfZ)+\pi_t\bfbeta_s(\bfZ)\}^{T} {\rm var}( \bX \mid \bfZ )\{\pi_s \bfbeta_t(\bfZ)+\pi_t\bfbeta_s(\bfZ)\}\right]\{\pi_t\pi_s(\pi_t+\pi_s)\}^{-1}  \\
	& + E\left[ \left\{\beta_t(Z)- \beta_s(Z)\right\}^T{\rm var}(X\mid Z)  \left\{\beta_t(Z)- \beta_s(Z)\right\}\right] \{ (\pi_t+\pi_s)^{-1}-1\}, \\
	\sigma_B^2 - \sigma_A^2 = & \, E\left[ \{\beta_t(Z)- \beta(Z)\}^T {\rm var}( \bX \mid \bfZ ) \{\beta_t(Z)- \beta(Z)\}\right]\pi_t^{-1}\\
	&+E\left[ \{\beta_s(Z)- \beta(Z)\}^T {\rm var}( \bX \mid \bfZ ) \{\beta_s(Z)- \beta(Z)\}\right]	\pi_s^{-1} \\
	&- E\left[ \left\{\beta_t(Z)- \beta_s(Z)\right\}^T{\rm var}(X\mid Z)  \left\{\beta_t(Z)- \beta_s(Z)\right\}\right].    
	\end{align*}
	Consequently,   ${\sigma}_{A}^2\leq \sigma_U^2$, where the equality holds if and only  if for every $\bfz \in {\cal Z}$,\vspace{-1mm}
	\begin{equation}
	\pi_s\beta_t(z)+\pi_t\beta_s(z)=0  \quad  \mbox{and} \quad   \{\beta_t(z)- \beta_s(z)\}(1-\pi_t-\pi_s)=0;  \label{equiv1}
	\end{equation}
	and ${\sigma}_{A}^2\leq {\sigma}_{B}^2$,  where the equality holds if and only if for every $\bfz \in {\cal Z}$, \vspace{-1mm}
	\begin{equation}
	\beta(z)=\{\pi_s\beta_t(z)+\pi_t\beta_s(z) \}/ (\pi_s+\pi_t) \quad  \mbox{and} \quad  \{\beta_t(z)- \beta_s(z)\}(1-\pi_t-\pi_s)=0. \label{equiv2}
	\end{equation}
\end{theorem}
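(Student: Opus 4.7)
The plan is to establish both identities by direct algebraic manipulation starting from the variance formulas given in Theorem~\ref{theo1}, and then to read off non-negativity and the equality conditions from the resulting decompositions. The single tool I use throughout is the standard regression identity: since $\beta_t(z)=\{\var(X\mid Z=z)\}^{-1}{\rm cov}(X,Y^{(t)}\mid Z=z)$, for any $Z$-measurable vector $b(Z)$,
\[
\var\{Y^{(t)}-X^{T}b(Z)\mid Z\}=\var\{Y^{(t)}-X^{T}\beta_t(Z)\mid Z\}+\{b(Z)-\beta_t(Z)\}^{T}\var(X\mid Z)\{b(Z)-\beta_t(Z)\},
\]
and, specialising to $b=\beta_t$,
\[
\var\{Y^{(t)}-X^{T}\beta_t(Z)\mid Z\}=\var(Y^{(t)}\mid Z)-\beta_t(Z)^{T}\var(X\mid Z)\beta_t(Z).
\]

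First I would substitute the second identity into the formula for $\sigma_A^2$, which reduces $\sigma_U^2-\sigma_A^2$ to $E[\pi_t^{-1}\beta_t^{T}\Sigma\beta_t+\pi_s^{-1}\beta_s^{T}\Sigma\beta_s-(\beta_t-\beta_s)^{T}\Sigma(\beta_t-\beta_s)]$ with $\Sigma(Z):=\var(X\mid Z)$. For $\sigma_B^2-\sigma_A^2$ I would apply the first identity with $b(Z)=\beta(Z)$ inside each variance in $\sigma_B^2$; the terms of the form $\var\{Y^{(t)}-X^{T}\beta_t(Z)\mid Z\}/\pi_t$ then cancel against the corresponding ones in $\sigma_A^2$, leaving exactly the three-term quadratic-form expression with coefficients $\pi_t^{-1}$, $\pi_s^{-1}$ and $-1$ claimed in the statement. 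This part is pure bookkeeping.

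For non-negativity and the equality conditions, the key algebraic identity is that for any vectors $u,v$ of matching dimension and any positive definite matrix $\Sigma$,
\[
\frac{u^{T}\Sigma u}{\pi_t}+\frac{v^{T}\Sigma v}{\pi_s}-(u-v)^{T}\Sigma(u-v)=\frac{(\pi_s u+\pi_t v)^{T}\Sigma(\pi_s u+\pi_t v)}{\pi_t\pi_s(\pi_t+\pi_s)}+\frac{1-\pi_t-\pi_s}{\pi_t+\pi_s}(u-v)^{T}\Sigma(u-v),
\]
which I would verify by comparing the coefficients of $u^{T}\Sigma u$, $v^{T}\Sigma v$ and $u^{T}\Sigma v$ on the two sides. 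Applying this pointwise in $z$ with $u=\beta_t(z)$, $v=\beta_s(z)$ and taking expectation recovers the displayed decomposition of $\sigma_U^2-\sigma_A^2$; applying it with $u=\beta(z)-\beta_t(z)$, $v=\beta(z)-\beta_s(z)$, so that $u-v=\beta_s(z)-\beta_t(z)$ and $\pi_s u+\pi_t v=(\pi_t+\pi_s)\beta(z)-\pi_s\beta_t(z)-\pi_t\beta_s(z)$, rewrites $\sigma_B^2-\sigma_A^2$ as a sum of two quadratic-form expectations. Because $\Sigma(z)$ is positive definite by hypothesis and $1-\pi_t-\pi_s=\sum_{r\neq t,s}\pi_r\geq 0$, every term in both decompositions is non-negative, so both differences are non-negative. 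Equality forces each quadratic form to vanish almost surely, and positive-definiteness of $\Sigma(z)$ then lets me strip $\Sigma(z)$ and recover exactly (\ref{equiv1}) and (\ref{equiv2}).

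The only mild obstacle is recognising the right decomposition that makes the $\sigma_B^2-\sigma_A^2$ inequality transparent: the expression the theorem gives for this difference is not itself manifestly non-negative, being a weighted sum of two positive and one negative quadratic forms in $\Sigma(Z)$. The algebraic identity above, applied with $u,v$ taken to be the two deviations $\beta-\beta_t$ and $\beta-\beta_s$, is essentially a ``completing the square'' across the two treatments and makes both the inequality and the two equality conditions drop out simultaneously; no probabilistic tools beyond the regression identity are required.
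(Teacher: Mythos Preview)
Your proposal is correct and, for the derivation of the two displayed identities and for $\sigma_U^2-\sigma_A^2\geq 0$, proceeds essentially as the paper does: expand the conditional variances via the regression identity, reduce everything to quadratic forms in $\Sigma(z)=\var(X\mid Z=z)$, and complete the square to exhibit both summands as non-negative.

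The one place where your route differs from the paper's is in establishing $\sigma_B^2-\sigma_A^2\geq 0$ and its equality condition. The paper treats the stratum-wise expression
\[
\frac{\{\beta_t(z)-\tilde\beta(z)\}^{T}\Sigma(z)\{\beta_t(z)-\tilde\beta(z)\}}{\pi_t}+\frac{\{\beta_s(z)-\tilde\beta(z)\}^{T}\Sigma(z)\{\beta_s(z)-\tilde\beta(z)\}}{\pi_s}
\]
as a function of a free vector $\tilde\beta(z)$, computes its gradient and Hessian, identifies the unique minimiser $\tilde\beta(z)=\{\pi_s\beta_t(z)+\pi_t\beta_s(z)\}/(\pi_t+\pi_s)$, and evaluates there to obtain the residual $(1-\pi_t-\pi_s)/(\pi_t+\pi_s)\cdot(\beta_t-\beta_s)^{T}\Sigma(\beta_t-\beta_s)$. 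You instead reuse your single completing-the-square identity with the substitution $u=\beta(z)-\beta_t(z)$, $v=\beta(z)-\beta_s(z)$, which yields the same two non-negative summands directly and without calculus. Your approach is slightly more economical and unified (one identity handles both inequalities); the paper's optimisation argument has the mild advantage of making explicit that $\sigma_A^2$ corresponds to the \emph{optimal} linear adjustment within each stratum, i.e., that no other choice of $\tilde\beta(z)$ can beat $\sigma_A^2$. Either way the equality conditions (\ref{equiv1}) and (\ref{equiv2}) follow immediately from positive definiteness of $\Sigma(z)$.
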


Theorem \ref{theo2} indicates that $\hat\theta_A$ is always asymptotically more efficient than $\hat\theta$ unless (\ref{equiv1}) holds, in which case $\hat\theta$ and $\hat\theta_A$ have the same asymptotic efficiency. 
This theoretically corroborates the perception that
covariate adjustment with a full set of treatment-covariate interactions can not hurt efficiency. 
When there are more than two treatments, $1-\pi_t - \pi_s >0$ and, consequently,  (\ref{equiv1}) holds  only when 
$\beta_t(z)= \beta_s(z)=0 $ for every $z$, i.e., 
$\bX$ is uncorrelated with the potential responses $Y^{(t)}$ and $Y^{(s)}$ after conditioning on $\bfZ$ so that adjusting for $\bX$ is unnecessary. 
When there are only two treatments,
(\ref{equiv1}) also holds if  $\pi_t=\pi_s= 1/2$ and 
$\beta_t(z)=  - \beta_s(z) $ for every $z$. An example is
given in \S 4.

Note that $\hat{\bfbeta} (\bfz )$ used in $\hat\theta_B$ ignores the fact that ${\rm cov}(\bX, Y^{(t)}\mid\bfZ=\bfz)$ may  depend on treatment $t$. That is why
$\hat\theta_B$ is  asymptotically not as efficient as $\hat\theta_A$ in general, and $\sigma^2_B = \sigma^2_A$ when these covariances are the same for every $t$ and every  $\bfz$, i.e., $ \beta_1(z)=\cdots= \beta_k(z) $. An exceptional case is that  $ \sigma_A^2=\sigma_B^2 $ when there are only two treatments and $ \pi_t=\pi_s=1/2 $. In fact, 
$\hat\theta_B$  may be asymptotically less efficient than $\hat\theta$, i.e., covariate adjustment with only the main effects may hurt efficiency, a  perspective in \cite{Freedman:2008aa} and \cite{Lin:2013aa}. 
For example, there are scenarios in which  (\ref{equiv1}) holds but  (\ref{equiv2}) does not.
Simulation examples are given in \S4, where comparisons of $\hat\theta$, $\hat\theta_A$, and $\hat\theta_B$ are made.  



To make model free inference on the average treatment effect $\theta$ defined in (\ref{theta}), we only need to apply Theorem 1 and construct consistent estimators of limiting variances  $\sigma^2_U$, $\sigma^2_A$, $\sigma^2_B$, and $\sigma^2_V$. 
Let  $S_t^2(\bfz)$ be the sample variance of $Y_i$'s in the group of patients under treatment $t$ with $\bfZ_i = \bfz$, 
$S_{t,A}^2(\bfz )$ be $S_t^2 (\bfz ) $  with $Y_i$ replaced by $Y_i - \bX_i ^{T} \hat{\bfbeta}_t (\bfz )$, 
$S_{t,B}^2(\bfz )$ be $S_t^2 (\bfz ) $  with $Y_i$ replaced by $Y_i - \bX_i ^{T} \hat{\bfbeta}(\bfz )$, and $\hat\Sigma (\bfz )$ be the sample covariance matrix of $\bX_i$'s within $\bfZ_i = \bfz$. 
It is shown in the  Supplementary Material that, under (C1)-(C3), the following estimators are consistent for $\sigma^2_U$, $\sigma^2_V$, $\sigma^2_A$, and $\sigma^2_B$, respectively, \vspace{-1mm}
\begin{align*}
\hat{\sigma}^2_U=& \,  \sum_{\bfz \in {\cal Z}}\frac{n(\bfz)}{n}\left\{ \frac{S_t^2(\bfz)}{\pi_t }+ \frac{S_s^2(\bfz)}{\pi_s}\right\}, \qquad  
\hat{\sigma}^2_V= \sum_{\bfz \in {\cal Z}}\frac{n(\bfz)}{n}\left\{ \bar{Y}_t(\bfz) -\bar{Y}_s(\bfz)\right\}^2-\hat\theta^2,   \\
\hat{\sigma}^2_A = & \,  \sum_{\bfz \in {\cal Z}}\frac{n(\bfz)}{n}\left[ \frac{S_{t,A}^2(\bfz)}{\pi_t }+ \frac{S_{s,A}^2(\bfz)}{\pi_s}
+ \{ \hat{\bfbeta}_t(\bfz) - \hat{\bfbeta}_s(\bfz)\}^{T} \hat\Sigma(\bfz) \{ \hat{\bfbeta}_t(\bfz) - \hat{\bfbeta}_s(\bfz)\} \right],\\
\hat{\sigma}^2_B = & \,  \sum_{\bfz \in {\cal Z}}\frac{n(\bfz)}{n}\left\{ \frac{S_{t,B}^2(\bfz)}{\pi_t }+ \frac{S_{s,B}^2(\bfz)}{\pi_s} \right\} , 
\end{align*}
regardless of which type of covariate-adaptive randomization method is used. 
Note that $\hat\sigma^2_U$ for $\hat\theta$ is different from the estimator defined in (36) of  
\cite{Bugni:2019aa}.


\section{Simulation Results}

There are  many publications on empirical studies under covariate-adaptive randomization in the last four decades. Some recent results are  in \cite{Senn:2010aa}, \cite{Kahan:2012aa}, and \cite{Xu:2016aa}. 

To evaluate and compare our proposed estimators $\hat\theta$, $\hat\theta_A$, and $\hat\theta_B$ in terms of estimation bias and standard deviation, and to examine variance estimators and 
the related asymptotic confidence intervals based on Theorem 1, we present some simulation results in this section. 
We consider two covariates, i.e., $\bfW = (X_1,X_2)$, where $X_1$ is binary with $\P(X_1 = 1)=1/2$ and, conditioned on $X_1$, $X_2 \sim  N(X_1- 0.5 , \, 1)$.
For the potential responses, we consider two treatments in cases I-III and three treatments in case IV.
\vspace{-1mm}
\begin{description}
	\item Case I:\ \ $\ Y^{(1)} \mid \bfW \sim N(  4X_1 + 2 X_2 , \, 1) $, $Y^{(2)} \mid \bfW \sim N(  \varphi + 4X_1 + 2 X_2 , \,1) $.
	\item Case II: $\ Y^{(1)} \mid \bfW \sim N(  4X_1 - 2 X_2 , \, 1) $, $Y^{(2)} \mid \bfW \sim N(  \varphi + 4X_1 + 2 X_2 , \,1) $. 
	\item Case III: $ Y^{(1)} \mid \bfW \sim N(  0.25 + 3X_1 + 0.2 X_2^2 , \, X_1+0.5) $, $Y^{(2)} \mid \bfW \sim N(  \varphi + 4X_1 + 2 X_2 , \,1) $. 
	\item Case IV: $ Y^{(3)}\mid W  \sim N(  \psi + 1+2X_1 -  X_2 , \,1) $, and $ Y^{(1)}$ and $ Y^{(2)} $ are the same as those in case III.
\end{description}\vspace{-1mm}
We use $\varphi =\psi =1$ in the simulation, which does not affect the relative performance of estimators and coverage probability of related confidence intervals. 

Case I has  homogeneous treatment effects; case II has treatment effect heterogeneity since the effects of $X_2$ on $Y^{(1)}$ and  $Y^{(2)}$ have different signs; case III has the most severe treatment effect heterogeneity as $Y^{(1)} \mid W$ and  $Y^{(2)} \mid W$ have very different distributions;  case IV considers multiple treatments. 

We consider 3 different $\bfZ$'s in covariate-adaptive randomization.
The first one is   $\bfZ = X_1$ with  2 levels, in which case 
the function of $\bfW$ not used in randomization but still related with the potential responses is $ h(\bfW )= X_2$. 
The second one is 
$\bfZ= (X_1, d_2)$ with  4 levels, where $d_2$ is the discretized $X_2$ with 2 categories $(-\infty , 0)$ and $[0, \infty )$, and 
$h(\bfW )$ is the continuous value of $X_2$ in $(-\infty , 0)$ or $(0, \infty)$. 
The third one is 
$\bfZ= (X_1, d_4)$ with 8 levels, where $d_4$ is the discretized $X_2$ with 4 categories $(-\infty , -0.8)$, $[-0.8, 0)$, $[0, 0.8)$, and $[0.8, \infty)$, and   
$h(\bfW )$ is the continuous value of $X_2$ in $(-\infty , -0.8)$, $(-0.8,0)$, $(0,0.8)$,  or $(0.8, \infty )$. 
In any case, $\bX = X_2$  is equivalent to $h(\bfW)$ and is used covariate adjustment. 

For the randomization method, we consider minimization with treatment allocation 1:1 or 1:2 for cases I-III, and 1:2:2 for case IV. 
Simulation results for two other randomization methods, the stratified permuted block randomization and 
the  stratified urn design can be found in the Supplementary Material. 

We consider the total sample size  $n=100$ or $500$.
For these sample sizes, the smallest possible expected numbers of patients within a stratum and treatment according to number of $Z$ levels are given in Table 3.  
It can be seen 	that when $n=100$ and $Z=(X_1,d_4)$ has 8 levels, with non-negligible probability, the number of  patients in some stratum-treatment combination is fewer than 2 and  thus calculation of estimators in (\ref{eq: theta hat})-(\ref{adj2}) and their variance estimators are not possible. Therefore, for cases I-III, we omit the scenario with  $n=100$ and $Z = (X_1,d_4)$. For  case IV, we focus on  $n=500$ and $Z = (X_1,d_2)$.

Tables 1-2 report the bias, standard deviation (SD), average estimated SD (SE), and coverage probability (CP) of asymptotic 95\% confidence interval, estimate $\pm 1.96 \,$SE, of $\hat\theta$, $\hat\theta_A$, and $\hat\theta_B$ for cases I-IV.
Every scenario is evaluated with 2,000 simulation runs. 
The following is a summary of the results in Tables 1-2.  \vspace{-2mm}
\begin{enumerate}
	\item 
	All estimators have negligible biases that are smaller than 1\% in most cases. 
	\item 
	The variance estimators or SE's are very accurate so that  the coverage probabilities of confidence intervals are adequate, 
	except for a few cases with $n=100$ and four $Z$  categories. 
	\item 
	With homogeneous  treatment effects in	case I, a more informative $\bfZ$ leads to a more efficient $\hat\theta$. However, the same phenomenon  may not exist when treatment effect heterogeneity exists, though a more informative $\bfZ$ does not lead to a less efficient $\hat\theta$.
	\item 
	Adjusting for covariates, i.e., using $\hat\theta_A$ or $\hat\theta_B$, may lead to substantial improvements over $\hat\theta$ in terms of SD, which again agrees with our theory. 
	The improvement is larger when a less informative $\bfZ$ is utilized in randomization, such as $\bfZ = X_1$.  
	Another interesting observation is that different $\bfZ$ used in randomization does not affect very much the SD of $\hat\theta_A$ or $\hat\theta_B$. 
	\item The comparison of $\hat\theta_A$ and $\hat\theta_B$ is also consistent with our theory in \S 3. Under 1:1 treatment allocation or homogeneous   treatment effects, $\hat\theta_B$ is as good as $\hat\theta_A$ and sometimes slightly better in finite sample performance. 
	When treatment allocation is 1:2 and treatment effect  heterogeneity exists, $\hat\theta_B$ is not as good as $\hat\theta_A$ and could be even worse than $\hat\theta$.  The same is observed when treatment allocation is 1:2:2. 
	\item 
	In case II with 1:1 treatment allocation, ${\rm cov} (\bX , Y^{(2)} \mid \bfZ = \bfz ) = - 
	{\rm cov} (\bX , Y^{(1)}\mid \bfZ = \bfz ) $, i.e., (\ref{equiv1}) holds and, thus, $\hat\theta$ and $\hat\theta_A$ have very similar SD's, as predicted by Theorem 2. 
	In this particular case, $\hat\theta_B$ is also as good as $\hat\theta_A$. 
	\item 
	$\hat\theta_A$ has large SD and low CP when some stratum-treatment combinations have small number of patients, such as the case of $n=100$ and $Z = (X_1 , d_2)$ with 4 levels when treatment allocation is 1:2. However, even if the smallest expected number of patients is as small as 7.7 in the case of $n=100$, 
	$Z$ having 4 levels, and 1:1 treatment allocation, $\hat\theta_A$ performs well. 
\end{enumerate}

\section{A Real Data Example}
In this section, we illustrate our methods by a real data example from \cite{Chong:2016aa} whose goal is to evaluate the contribution
of low dietary iron intake to human capital attainment by measuring the causal effect of reducing adolescent anemia on school attainment. The dataset is publicly available at \texttt{https://www.openicpsr.org/openicpsr/project/113624/version/V1/view}. In brief, \cite{Chong:2016aa} conducted an experiment on adolescents aged from 11 to 19 in rural Peru (Cajamarca) where the burden of iron deficiency is high, to study whether a low-cost intervention can encourage students to increase their iron intake and hence improve their school performance. 
A stratified permuted block randomization design was applied to assign $n= 219$ students to one of the following three promotional videos, considered as three treatments, with treatment allocation 1:1:1.  The first video shows a popular soccer player encouraging iron supplements to maximize energy; the second video shows a physician encouraging iron supplements for overall health; and the third ``placebo" video shows a dentist encouraging oral hygiene without mentioning iron at all. The strata are student's school grades $ \bfZ \in {\cal Z} =\{1,2,3,4,5\}$. 
\cite{Chong:2016aa} studied a variety of outcomes regarding cognitive function, school performance, and aspirations. As an example, we focus on the outcome of academic achievement, which is a standardized average of a student’s academic grades from the fall semester in subjects of math, foreign language, social sciences, science, and communications. The same outcome is also used by \cite{Bugni:2019aa} in an example. 

Estimates $\hat\theta$, $\hat\theta_A$, and $\hat\theta_B$ and their SE's are reported in Table 4 for the average treatment effect between the soccer player and placebo videos, or physician  and  placebo videos, together with the p-values associated with two-sided tests of no treatment effect. The estimates from $\hat\theta$ are the same as those in \cite{Bugni:2019aa}.
The covariate $X$ used in $\hat\theta_A$ and $\hat\theta_B$ is the baseline anemia status thought to have interactive effect with treatment on the outcome, as mentioned in \cite{Chong:2016aa}.
It can be seen that the SE of $\hat\theta_A$ is the smallest and, in terms of p-values,  the effect between 
physician  and  placebo videos is only marginally significant when $\hat\theta$ is used, but very significant based on $\hat\theta_A$. 

\section{Recommendations and Discussions}

To improve asymptotic efficiency,  we recommend $\hat\theta_A$ in (\ref{adj1})
since it is asymptotically better than $\hat\theta$ in (\ref{eq: theta hat}) or $\hat\theta_B$ in (\ref{adj2}). 
In the special case of two treatment arms with equal allocation, 
we recommend
$\hat\theta_B$,  since it is asymptotically equivalent to $\hat\theta_A$ and has better empirical performance. 

As full stratification according to $\bfZ$  is required, one limitation of estimators 
in  (\ref{eq: theta hat})-(\ref{adj2}) is that all strata need to have large enough sizes,  at least 10 per stratum and treatment combination \citep{Ye:2020aa}.  
Note that both covariate-adaptive randomization in treatment assignment and adjustment for covariates in estimation can  gain efficiency, and  covariate-adaptive randomization has an additional practically important advantage of balancing assignments across prognostic factors. Thus,  
how to  choose $\bfZ$ and $\bX$ is an important future research. 
It is also interesting to study estimators by combining strata of small sizes. 

\section*{Supplementary Material}
\label{SM}
Supplementary material contains all  technical proofs and more simulation results.

\clearpage
\begin{table}[h]
	\begin{center}
		\caption{Bias, standard deviation (SD), average estimated SD (SE), and coverage probability (CP) of 95\% asymptotic confidence interval under minimization for cases I-III} \vspace{-2mm}
		\resizebox{0.9\textwidth}{!}{
			\begin{tabular}{cccccccccccccc}
				\hline \\[-1.5ex]
				&      &              &                &  & \multicolumn{4}{c}{treatment   allocation 1:1} &  & \multicolumn{4}{c}{treatment   allocation 1:2} \\[0.5ex] 
				\cline{6-9} \cline{11-14} \\[-1.5ex]
				$n$ & case & $Z$          & estimator      &  & bias       & SD        & SE        & CP        &  & bias       & SD        & SE        & CP        \\[0.5ex] 
				\cline{1-4} \cline{6-9} \cline{11-14} \\[-1.5ex]
				500 & I    & $X_1$    	  & $\hat\theta$   &  & -0.0038    & 0.1980    & 0.1999    & 0.9590    &  & 0.0070     & 0.2159    & 0.2124    & 0.9465    \\
				&      &              & $\hat\theta_B$ &  & -0.0016    & 0.0909    & 0.0893    & 0.9445    &  & 0.0016     & 0.0954    & 0.0949    & 0.9510    \\
				&      &              & $\hat\theta_A$ &  & -0.0016    & 0.0908    & 0.0893    & 0.9445    &  & 0.0017     & 0.0954    & 0.0948    & 0.9490    \\[0.3ex]
				&      & $X_1$, $d_2$ & $\hat\theta$   &  & -0.0029    & 0.1492    & 0.1466    & 0.9450    &  & -0.0013    & 0.1537    & 0.1553    & 0.9560    \\
				&      &              & $\hat\theta_B$ &  & -0.0017    & 0.0900    & 0.0893    & 0.9455    &  & -0.0011    & 0.0962    & 0.0947    & 0.9440    \\
				&      &              & $\hat\theta_A$ &  & -0.0018    & 0.0901    & 0.0894    & 0.9455    &  & -0.0009    & 0.0965    & 0.0945    & 0.9435    \\[0.3ex]
				&      & $X_1$, $d_4$ & $\hat\theta$   &  & 0.0005     & 0.1143    & 0.1150    & 0.9560    &  & 0.0003     & 0.1237    & 0.1219    & 0.9485    \\
				&      &              & $\hat\theta_B$ &  & 0.0010     & 0.0903    & 0.0893    & 0.9505    &  & 0.0013     & 0.0967    & 0.0946    & 0.9425    \\
				&      &              & $\hat\theta_A$ &  & 0.0007     & 0.0908    & 0.0893    & 0.9500    &  & 0.0010     & 0.0990    & 0.0944    & 0.9415    \\[0.3ex]
				& II   & $X_1$        & $\hat\theta$   &  & 0.0085     & 0.2212    & 0.2191    & 0.9480    &  & 0.0067     & 0.2320    & 0.2303    & 0.9505    \\
				&      &              & $\hat\theta_B$ &  & 0.0086     & 0.2222    & 0.2185    & 0.9495    &  & 0.0063     & 0.2563    & 0.2541    & 0.9430    \\
				&      &              & $\hat\theta_A$ &  & 0.0086     & 0.2214    & 0.2191    & 0.9500    &  & 0.0078     & 0.2255    & 0.2212    & 0.9470    \\[0.3ex]
				&      & $X_1$, $d_2$ & $\hat\theta$   &  & 0.0084     & 0.2201    & 0.2191    & 0.9480    &  & 0.0076     & 0.2284    & 0.2251    & 0.9465    \\
				&      &              & $\hat\theta_B$ &  & 0.0076     & 0.2214    & 0.2178    & 0.9440    &  & 0.0078     & 0.2407    & 0.2344    & 0.9350    \\
				&      &              & $\hat\theta_A$ &  & 0.0085     & 0.2204    & 0.2190    & 0.9475    &  & 0.0077     & 0.2242    & 0.2212    & 0.9450    \\[0.3ex]
				&      & $X_1$, $d_4$ & $\hat\theta$   &  & 0.0057     & 0.2222    & 0.2192    & 0.9440    &  & 0.0104     & 0.2256    & 0.2230    & 0.9405    \\
				&      &              & $\hat\theta_B$ &  & 0.0061     & 0.2233    & 0.2177    & 0.9425    &  & 0.0108     & 0.2289    & 0.2254    & 0.9420    \\
				&      &              & $\hat\theta_A$ &  & 0.0057     & 0.2221    & 0.2190    & 0.9430    &  & 0.0101     & 0.2252    & 0.2211    & 0.9420    \\[0.3ex]
				& III  & $X_1$        & $\hat\theta$   &  & 0.0003     & 0.1716    & 0.1731    & 0.9475    &  & 0.0072     & 0.1691    & 0.1667    & 0.9425    \\
				&      &              & $\hat\theta_B$ &  & 0.0029     & 0.1495    & 0.1477    & 0.9500    &  & 0.0048     & 0.1675    & 0.1656    & 0.9475    \\
				&      &              & $\hat\theta_A$ &  & 0.0031     & 0.1496    & 0.1479    & 0.9480    &  & 0.0081     & 0.1546    & 0.1533    & 0.9465    \\[0.3ex]
				&      & $X_1$, $d_2$ & $\hat\theta$   &  & 0.0016     & 0.1580    & 0.1593    & 0.9490    &  & 0.0032     & 0.1603    & 0.1595    & 0.9465    \\
				&      &              & $\hat\theta_B$ &  & 0.0050     & 0.1468    & 0.1470    & 0.9460    &  & 0.0061     & 0.1621    & 0.1576    & 0.9420    \\
				&      &              & $\hat\theta_A$ &  & 0.0032     & 0.1464    & 0.1474    & 0.9480    &  & 0.0052     & 0.1559    & 0.1523    & 0.9440    \\[0.3ex]
				&      & $X_1$, $d_4$ & $\hat\theta$   &  & 0.0042     & 0.1528    & 0.1525    & 0.9465    &  & 0.0049     & 0.1601    & 0.1557    & 0.9405    \\
				&      &              & $\hat\theta_B$ &  & 0.0080     & 0.1495    & 0.1468    & 0.9425    &  & 0.0094     & 0.1577    & 0.1539    & 0.9395    \\
				&      &              & $\hat\theta_A$ &  & 0.0048     & 0.1493    & 0.1474    & 0.9450    &  & 0.0069     & 0.1573    & 0.1521    & 0.9410    \\
				\\[-1.5ex]
				100 & I    & $X_1$    & $\hat\theta$   &  & 0.0111     & 0.4441    & 0.4486    & 0.9545    &  & 0.0066     & 0.4609    & 0.4747    & 0.9580    \\
				&      &              & $\hat\theta_B$ &  & 0.0038     & 0.2035    & 0.1982    & 0.9425    &  & -0.0017    & 0.2114    & 0.2092    & 0.9490    \\
				&      &              & $\hat\theta_A$ &  & 0.0038     & 0.2035    & 0.1983    & 0.9425    &  & -0.0014    & 0.2126    & 0.2083    & 0.9485    \\[0.3ex]
				&      & $X_1$, $d_2$ & $\hat\theta$   &  & 0.0108     & 0.3316    & 0.3305    & 0.9425    &  & 0.0034     & 0.3475    & 0.3501    & 0.9490    \\
				&      &              & $\hat\theta_B$ &  & 0.0029     & 0.2021    & 0.1981    & 0.9465    &  & 0.0016     & 0.2181    & 0.2092    & 0.9365    \\
				&      &              & $\hat\theta_A$ &  & 0.0013     & 0.2148    & 0.2013    & 0.9440    &  & -0.0035    & 0.4016    & 0.2190    & 0.9140    \\[0.3ex]
				& II   & $X_1$        & $\hat\theta$   &  & -0.0057    & 0.4857    & 0.4902    & 0.9515    &  & -0.0143    & 0.5131    & 0.5157    & 0.9445    \\
				&      &              & $\hat\theta_B$ &  & -0.0044    & 0.4958    & 0.4828    & 0.9420    &  & -0.0152    & 0.5710    & 0.5612    & 0.9405    \\
				&      &              & $\hat\theta_A$ &  & -0.0049    & 0.4878    & 0.4894    & 0.9500    &  & -0.0101    & 0.4965    & 0.4943    & 0.9455    \\[0.3ex]
				&      & $X_1$, $d_2$ & $\hat\theta$   &  & -0.0062    & 0.4874    & 0.4904    & 0.9495    &  & -0.0102    & 0.5013    & 0.5031    & 0.9545    \\
				&      &              & $\hat\theta_B$ &  & -0.0077    & 0.4981    & 0.4774    & 0.9355    &  & -0.0096    & 0.5400    & 0.5114    & 0.9305    \\
				&      &              & $\hat\theta_A$ &  & -0.0076    & 0.4906    & 0.4904    & 0.9455    &  & -0.0147    & 0.5993    & 0.4999    & 0.9390    \\[0.3ex]
				& III  & $X_1$        & $\hat\theta$   &  & 0.0003     & 0.3855    & 0.3869    & 0.9475    &  & -0.0074    & 0.3729    & 0.3717    & 0.9470    \\
				&      &              & $\hat\theta_B$ &  & 0.0050     & 0.3349    & 0.3266    & 0.9410    &  & -0.0052    & 0.3744    & 0.3649    & 0.9365    \\
				&      &              & $\hat\theta_A$ &  & 0.0047     & 0.3314    & 0.3291    & 0.9465    &  & 0.0045     & 0.3486    & 0.3390    & 0.9400    \\[0.3ex]
				&      & $X_1$, $d_2$ & $\hat\theta$   &  & -0.0004    & 0.3541    & 0.3566    & 0.9445    &  & -0.0054    & 0.3572    & 0.3569    & 0.9510    \\
				&      &              & $\hat\theta_B$ &  & 0.0104     & 0.3313    & 0.3236    & 0.9425    &  & 0.0084     & 0.3663    & 0.3445    & 0.9305    \\
				&      &              & $\hat\theta_A$ &  & 0.0015     & 0.3315    & 0.3298    & 0.9435    &  & 0.0003     & 0.5643    & 0.3456    & 0.9310    \\[0.5ex]
				\hline
		\end{tabular}}
	\end{center}
\end{table}

\begin{table}[ht!]
	\begin{center}
		\captionsetup{justification=centering}
		\caption{Bias, standard deviation (SD), average estimated SD (SE), and coverage probability (CP) of 95\% asymptotic confidence interval under minimization for case IV with $n=500$}
		\vspace{-2mm}
		\resizebox{!}{!}{
			\begin{tabular}{ccccccccc}
				\hline \\[-1.5ex]
				$t$ & $s$ & $\theta$ & estimator      &  & bias    & SD     & SE     & CP     \\[0.5ex]
				\cline{1-4} \cline{6-9} \\[-1.5ex] 
				2 & 1 & 1                    & $\hat\theta$   &  & -0.0007 & 0.1907 & 0.1901 & 0.9515 \\
				&   &                      & $\hat\theta_B$ &  & 0.0040  & 0.1840 & 0.1821 & 0.9470 \\
				&   &                      & $\hat\theta_A$ &  & 0.0058  & 0.1777 & 0.1726 & 0.9375 \\
				&   &                      &                &  &         &        &        &        \\[-1.5ex]
				3 & 1 & 1                    & $\hat\theta$   &  & -0.0004 & 0.1541 & 0.1546 & 0.9445 \\
				&   &                      & $\hat\theta_B$ &  & 0.0037  & 0.1616 & 0.1615 & 0.9445 \\
				&   &                      & $\hat\theta_A$ &  & 0.0052  & 0.1479 & 0.1460 & 0.9395 \\
				&   &                      &                &  &         &        &        &        \\[-1.5ex]
				3 & 2 & 0                    & $\hat\theta$   &  & 0.0004  & 0.2094 & 0.2082 & 0.9505 \\
				&   &                      & $\hat\theta_B$ &  & -0.0002 & 0.2077 & 0.2048 & 0.9445 \\
				&   &                      & $\hat\theta_A$ &  & -0.0006 & 0.2019 & 0.2007 & 0.9495 \\[0.5ex]
				\hline
		\end{tabular}}
		
		\vspace{10mm}
		\caption{Smallest expected number of patients  among all stratum-treatment combinations}
		\vspace{-2mm}
		\resizebox{!}{!}{
			\begin{tabular}{ccccccc}
				\hline \\[-1.5ex]
				$n$   & $Z$        & number of levels &  & 1:1 allocation & 1:2 allocation & 1:2:2 allocation \\[0.5ex]
				\cline{1-3} \cline{5-7} \\[-1.5ex]
				100 & $X_1$      & 2                &  & 25.0           & 16.7           & 10.0             \\
				& $X_1, d_2$ & 4                &  & \ 7.7            &\ \ 5.1            & \ \ 3.1              \\
				& $X_1, d_4$ & 8                &  & \ 2.4            & \ 1.6            & \ 1.0              \\
				&            &                  &  &                &                &                  \\[-1.5ex]
				500 & $X_1$      & 2                &  & \ 125            & 83.3           & \ \ 50               \\
				& $X_1, d_2$ & 4                &  & 38.6           & 25.7           & 15.4             \\
				& $X_1, d_4$ & 8                &  & 12.1           & \ \ 8.1            & \ \ 4.8              \\[0.5ex]
				\hline
		\end{tabular}}
		
		\vspace{10mm}
		\caption{Results from real data analysis}
		\vspace{-2mm}
		\resizebox{!}{!}{
			\begin{tabular}{ccccccccc}
				\hline \\[-1.5ex]
				&  & \multicolumn{3}{c}{soccer versus   placebo}         &  & \multicolumn{3}{c}{physician versus  placebo} \\[0.5ex]
				\cline{3-5} \cline{7-9} \\[-1.5ex]
				estimator 	   &  & estimate         & SE          & p-value      &  & estimate          & SE           & p-value \\[0.5ex]
				\hline \\[-1.5ex]
				$\hat\theta$   &  & -0.051       & 0.205       & 0.803       &  & 0.409         & 0.207        & 0.048 \\
				$\hat\theta_B$ &  & -0.089       & 0.203       & 0.661       &  & 0.444         & 0.202        & 0.028 \\
				$\hat\theta_A$ &  & -0.048       & 0.199       & 0.807       &  & 0.480         & 0.198        & 0.015 \\[0.5ex]
				\hline
		\end{tabular}}
	\end{center}
\end{table}
\clearpage 

\singlespacing
\bibliographystyle{apalike}
\bibliography{reference}
\clearpage
\pagenumbering{arabic}
\begin{center}
	{\LARGE\bf Supplementary Material}
\end{center}
\medskip
\setcounter{equation}{0}
\setcounter{table}{0}
\renewcommand{\theequation}{S\arabic{equation}}
\renewcommand{\thetable}{S\arabic{table}}
\section*{Technical Proofs}
\subsection*{Proof of Theorem 1}
{\bf Asymptotics for $ \hat\theta $}: We start with deriving the asymptotic distribution of $ \hat\theta $. As mentioned in the main article, the key in this proof is decomposing $ \hat\theta- \theta $ as the sum of $ U $ and $ V $, where 
\begin{align}
U= &\sum_{\bfz \in {\cal Z}} \frac{n(\bfz)}{n}\left[ \{ \bar{Y}_t(\bfz)-\bar{Y}_s(\bfz)\} -\{ E(Y^{(t)}\mid\bfZ=\bfz) -E(Y^{(s)}\mid \bfZ=\bfz) \}\right],\label{suppeq: U} \\
V= &\sum_{\bfz \in {\cal Z}} \frac{n(\bfz)}{n} \{ E(Y^{(t)}\mid \bfZ=\bfz) -E(Y^{(s)}\mid\bfZ=\bfz) \} -\theta. \label{suppeq: V}
\end{align}

Let $\calA = (\bfZ_1, \ldots, \bfZ_n, I_1, \ldots, I_n)$, $\mI (A)$ be the indicator function of event $A$, and $n_t(\bfz)= \sum_{i=1}^n \mI (I_i=e_t, \bfZ_i=\bfz)$
be the number of patients within stratum $\bfZ_i=\bfz$ and  treatment $t$.
Under (C1)-(C3), $n(\bfz ), n_t(\bfz)$, and $n_s(\bfz)$ are  functions of  $\calA$ and the conditional expectation  
$ E \{ \bar{Y}_t(\bfz) \mid \calA \}$ is equal to 
$	E(Y^{(t)}\mid\bfZ=\bfz)$,  which implies that 
$E(U\mid\calA)=0$ a.s. and $E(U)=0$.
Note that
\begin{align}
V&= \frac{1}{n} \sum_{\bfz \in {\cal Z}} \sum_{i=1}^n \left[\mI (\bfZ_i=\bfz) \{ E(Y^{(t)}\mid\bfZ=\bfz) -E(Y^{(s)}\mid\bfZ=\bfz) \} \right]-\theta\nonumber\\
&= \frac{1}{n}\sum_{i=1}^n\left[  \sum_{\bfz \in {\cal Z}} \mI (\bfZ_i=\bfz) \{ E(Y^{(t)}\mid\bfZ=\bfz) -E(Y^{(s)}\mid\bfZ=\bfz) \}\right] -\theta\nonumber\\
&= \frac{1}{n} \sum_{i=1}^n  E(Y_i^{(t)}-Y_i^{(s)}\mid\bfZ_i) -\theta \nonumber 
\end{align}
Hence, $E(V)=0$ and consequently  $E(\hat{\theta})=\theta$, which establishes the unbiasedness of 
$\hat{\theta}$. 

Next, we establish the asymptotic normality of $ \hat\theta $.
From the previous derivations of $V$,  $V$ is simply an average of independent and identically distributed terms and from the Central Limit Theorem,
$$
\sqrt{n} V\xrightarrow{d} N(0,\sigma_V^2). $$
We now turn to $U$. 	
Notice that  $I_i$'s are involved in $U$, which  results in complicated dependence and is the major difficulty in deriving the asymptotic distribution of $U$. A useful technique that can largely simplify the problem is to derive the  distribution of $U$ conditional on  $\calA = (\bfZ_1, \ldots, \bfZ_n, I_1, \ldots, I_n)$. We define  
$$U(\bfz) = \frac{n(\bfz)}{n}\left[ \{ \bar{Y}_t(\bfz)-\bar{Y}_s(\bfz)\} -\{ E(Y^{(t)}\mid\bfZ=\bfz) -E(Y^{(s)}\mid\bfZ=\bfz) \}\right]$$
so that $U=\sum_{\bfz \in {\cal Z}} U(\bfz)$. 
From the fact that $E\{U(\bfz)\mid\calA\} =0 $ a.s.\ and Lindeberg's Central Limit Theorem, we conclude that, for every $\bfz$, as $n\rightarrow \infty$,
\begin{align}
\frac{ U(\bfz)}{\sqrt{\var \{ U(\bfz)\mid\calA\}}} \, \mid \, \calA \xrightarrow{d} N(0,1). \nonumber
\end{align}
The Lindeberg's condition will be verified at the end. 

Next, we prove that 
$
{\rm cov} \{  \bar{Y}_t(\bfz) -\bar{Y}_s(\bfz), \bar{Y}_t(\bfz') -\bar{Y}_s(\bfz')\mid\calA\}=0 $ a.s.\ for   $\bfz\neq \bfz'$. 
By definition this conditional covariance is equal to 

\begin{align*}
& {\rm cov} \left[ \sum_{i=1}^n\mI (\bfZ_i=\bfz) \left\{ \frac{\mI(I_i=e_t) Y_{i}^{(t)}}{n_t(\bfz)}- \frac{\mI(I_i=e_s) Y_{i}^{(s)}}{n_s(\bfz)} \right\} ,\right.  \\
&\qquad \qquad\qquad\qquad\qquad\qquad\qquad\qquad\left. \sum_{j=1}^n\mI (\bfZ_j=\bfz') \left\{ \frac{\mI(I_j=e_t) Y_{j}^{(t)}}{n_t(\bfz')}- \frac{\mI(I_j=e_s) Y_{j}^{(s)}}{n_s(\bfz')} \right\}\mid\calA\right]\\
&=  \, \sum_{i=1}^n \sum_{j=1}^n \mI (\bfZ_i=\bfz)\mI (\bfZ_j=\bfz') \ {\rm cov}\left\{\frac{\mI(I_i=e_t) Y_{i}^{(t)}}{n_t(\bfz)}- \frac{\mI(I_i=e_s) Y_{i}^{(s)}}{n_s(\bfz)},  \right.\\
&\qquad \qquad\qquad\qquad\qquad\qquad\qquad\qquad\left.\frac{\mI(I_j=e_t) Y_{j}^{(t)}}{n_t(\bfz')}- \frac{\mI(I_j=e_s) Y_{j}^{(s)}}{n_s(\bfz')} \mid\calA\right\}
\end{align*}
because $\{ \bfZ_i, i=1,\ldots,n \} \subset \calA$. 
When $i=j$, $\mI (\bfZ_i=\bfz)\mI (\bfZ_i=\bfz') = 0$ for $\bfz\neq\bfz'$. 
When $i\neq j$, the terms are also equal to zero because ${\rm cov} (Y_{i}^{(\ell)}, Y_{j}^{(m)}\mid\calA)=0$ a.s. for $\ell , m=1,\dots, k$ from (C1)-(C2). From the definition of $U(\bfz )$, this also proves that  ${\rm cov} \{ U(\bfz), U(\bfz')\mid\calA \}=0$ a.s.\ for $\bfz\neq \bfz'$.
Then,  it follows from the delta method that 
\[
\frac{U}{ \sqrt{\sum_{\bfz \in {\cal Z}}	\var \{ U(\bfz)\mid\calA\} }} \, \mid \, \calA \xrightarrow{d} N(0,1).
\]
From the bounded convergence theorem, this result still holds unconditionally, i.e., 
\[
\frac{U}{ \sqrt{\sum_{\bfz \in {\cal Z}}	\var \{ U(\bfz)\mid\calA\} }}  \xrightarrow{d} N(0,1).
\]
Note that 
\begin{align}
&n	\sum_{\bfz \in {\cal Z} } \var\{U(\bfz)\mid\calA\}  \nonumber\\
&= \sum_{\bfz \in {\cal Z}} \frac{n^2(\bfz)}{n} \var \left\{  \frac{1 }{n_t(\bfz)} \sum_{i: \bfZ_i=\bfz} \mI(I_i=e_t) Y_i^{(t)} - \frac{1}{n_s(\bfz)}\sum_{i: \bfZ_i=\bfz} \mI(I_i=e_s)Y_i^{(s)}   \mid \calA \right\}\nonumber\\
&= \sum_{\bfz \in {\cal Z}} \frac{n^2(\bfz)}{n} \var \left\{  \sum_{i: \bfZ_i=\bfz}  \frac{ \mI(I_i=e_t) Y_i^{(t)}}{n_t(\bfz)} - \frac{ \mI(I_i=e_s) Y_i^{(s)} }{n_s(\bfz)} \mid \calA \right\}\nonumber\\
&=  \sum_{\bfz \in {\cal Z}} \frac{n^2(\bfz)}{n} \sum_{i: \bfZ_i=\bfz} \var\left\{  \frac{ \mI(I_i=e_t) Y_i^{(t)} }{n_t(\bfz)} - \frac{ \mI(I_i=e_s) Y_i^{(s)} }{n_s(\bfz)}  \mid \calA \right\} \nonumber\\
&=  \sum_{\bfz \in {\cal Z}} \frac{n^2(\bfz)}{n} \sum_{i: \bfZ_i=\bfz} \left\{  \frac{ \mI(I_i=e_t) \var(Y_i^{(t)}\mid\bfZ_i) }{n^2_t(\bfz)} + \frac{ \mI(I_i=e_s) \var(Y_i^{(s)}\mid\bfZ_i) }{n^2_s(\bfz)} \right\} \nonumber\\
&= \sum_{\bfz \in {\cal Z}} \frac{n(\bfz)}{n} \left\{ \frac{n(\bfz)}{n_t(\bfz)} \var(Y^{(t)}\mid\bfZ=\bfz)+ \frac{n(\bfz)}{n_s(\bfz)} \var(Y^{(s)}\mid\bfZ=\bfz)\right\} \nonumber \\
&=\sum_{\bfz \in {\cal Z}} \P(\bfZ=\bfz)\left\{ \frac{1}{\pi_t} \var(Y^{(t)}\mid\bfZ=\bfz)+ \frac{1}{\pi_s}\var(Y^{(s)}\mid\bfZ=\bfz)\right\} +o_p(1)\nonumber \\
&= E\left\{ \frac{1}{\pi_t} \var(Y^{(t)}\mid\bfZ) +\frac{1}{\pi_s} \var(Y^{(s)}\mid\bfZ)\right\}+o_p(1)\nonumber \\
& = \sigma^2_U +o_p(1)\nonumber 
\end{align}
where the first equality is because $n(\bfz)$ is a function of $\calA$, the third equality is because  the summands are mutually independent conditional on $\calA$, the fourth equality is from (C1)-(C2), the sixth equality is because $n(\bfz)/n=\P(\bfZ=\bfz)+o_p(1)$, and from (C2)-(C3), $n_t(\bfz)/n(\bfz)=n(\bfz)^{-1}D_t(\bfz)+\pi_t=\pi_t+o_p(1)$ for every $ t $.
From Slutsky's theorem, this proves that as $n\rightarrow \infty$,
$$ \sqrt{n} U / \sigma_U  \xrightarrow{d} N(0,1).$$
Combining  the results for $U$ and $V$ and using the fact that  $U$ and $V$  are uncorrelated because $E(U V)=E\{VE(U\mid\calA)\}=0$, we conclude that the results for $ \hat\theta $ hold.

For completeness, we verify the Lindeberg's condition, by rearranging $ U(z) $ as
\begin{align}
U(z)=  \sum_{i=1}^n (K_{i}^{(t)}-K_{i}^{(s)}), \qquad K_i^{(t)}= \frac{n(z)}{n} \frac{\mI (I_i=e_t, Z_i=z)}{n_t(z)} \{Y_i^{(t)}- E(Y^{(t)}\mid Z_i=z)\} \nonumber
\end{align}
We have already shown that $ E (K_i^{(t)}- K_i^{(s)}\mid {\cal D})=0$ and 
\begin{align}
&\var (K_i^{(t)}- K_i^{(s)}\mid {\cal D}) \nonumber\\
&= \frac{n^2(z)}{n^2n_t^2(z)} \mI (I_i=e_t, Z_i=z) \var(Y^{(t)}\mid Z=z) +  \frac{n^2(z)}{n^2n_s^2(z)} \mI (I_i=e_s, Z_i=z) \var(Y^{(s)}\mid Z=z).\nonumber
\end{align} 
Then, the Lindeberg's condition holds because for any $ \epsilon>0 $, 
\begin{align*}
&\sum_{i=1}^n E\left[  \frac{(K_i^{(t)}- K_i^{(s)})^2}{\var(U(z)\mid {\cal D})}\mI \left\{\frac{( K_i^{(t)}- K_i^{(s)})^2}{  \var (U(z)\mid {\cal D}) }>\epsilon\right\}\mid {\cal D}\right] \\
&= \sum_{i=1}^n  \frac{\var (K_i^{(t)}- K_i^{(s)}\mid {\cal D}) }{\var(U(z)\mid {\cal D})} E\left[\frac{(K_i^{(t)}- K_i^{(s)})^2}{\var (K_i^{(t)}- K_i^{(s)}\mid {\cal D})}\mI \left\{\frac{( K_i^{(t)}- K_i^{(s)})^2}{  \var (U(z)\mid {\cal D}) }>\epsilon\right\}\mid {\cal D}\right]\\
&\leq \max_i E\left[\frac{(K_i^{(t)}- K_i^{(s)})^2}{\var (K_i^{(t)}- K_i^{(s)}\mid {\cal D})}\mI \left\{\frac{( K_i^{(t)}- K_i^{(s)})^2}{\var (K_i^{(t)}- K_i^{(s)}\mid {\cal D})}>\epsilon \frac{\var (U(z)\mid \cal D)}{\var (K_i^{(t)}- K_i^{(s)}\mid {\cal D})}\right\}\mid {\cal D}\right]\\
&=o(1)
\end{align*}
where the third line is because $ \sum_{i=1}^n \var (K_i^{(t)}- K_i^{(s)}\mid {\cal D})= \var(U(z)\mid \cal D) $, and the last line is because $  K_i^{(t)}- K_i^{(s)}/\sqrt{ \var (K_i^{(t)}- K_i^{(s)}\mid {\cal D})  } $ has zero expectation  and unit variance, and that $ \max_i\var (K_i^{(t)}- K_i^{(s)}\mid {\cal D}) / \var (U(z)\mid {\cal D})\leq \max( \{ n_t(z)\}^{-1}, \{n_s(z)\}^{-1})=o(1)$. \\

{\bf Asymptotics for $ \hat\theta_A, \hat\theta_B$}: To establish the results for $ \hat\theta_A $ and $ \hat\theta_B $,  we need the following lemma about the asymptotic limits  of $\hat{\bfbeta}_t(\bfz)$ used in $\hat\theta_A$  and  $\hat{\bfbeta} (\bfz)$ used in $\hat\theta_B$. 

\begin{lemma} \label{lemma: coef}
	Under the conditions of Theorem \ref{theo1}, $\hat{\bfbeta}_t(\bfz) = \bfbeta_t(\bfz)+o_p(1)$ and $\hat{\bfbeta} (\bfz)=\bfbeta(\bfz)+o_p(1)$, for every $\bfz \in {\cal Z}$, where $\bfbeta_t(\bfz )$ and $\bfbeta (\bfz )$ are defined in Theorem \ref{theo1}.  
\end{lemma}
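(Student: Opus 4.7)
The plan is to exploit condition (C2), which implies that conditional on $(\bfZ_1,\ldots,\bfZ_n,I_1,\ldots,I_n)$ the treatment assignments are independent of the potential outcomes and baseline covariates, so the sample moments entering $\hat{\bfbeta}_t(\bfz)$ and $\hat{\bfbeta}(\bfz)$ reduce, after normalization, to averages of i.i.d.\ copies drawn from $(\bX, Y^{(t)})\mid\bfZ=\bfz$. Convergence will then follow from a conditional law of large numbers plus continuity of matrix inversion.

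First I would fix $\bfz\in{\cal Z}$ and set $\calA=(\bfZ_1,\ldots,\bfZ_n,I_1,\ldots,I_n)$. By (C1) and (C2), conditional on $\calA$ the family $\{(\bX_i,Y_i^{(t)}):I_i=e_t,\bfZ_i=\bfz\}$ consists of $n_t(\bfz)$ i.i.d.\ copies of $(\bX,Y^{(t)})\mid\bfZ=\bfz$, whose second moments are finite by the hypotheses of Theorem~\ref{theo1}. From (C3), $n_t(\bfz)/n(\bfz)=\pi_t+o_p(1)$ and $n(\bfz)/n\to \P(\bfZ=\bfz)>0$ in probability, so $n_t(\bfz)\to\infty$ in probability. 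Applying Chebyshev conditional on $\calA$,
\[ \P\!\left(\|\bar{\bX}_t(\bfz)-E(\bX\mid\bfZ=\bfz)\|>\epsilon\;\Big|\;\calA\right)\;\le\;\frac{{\rm tr}\{\var(\bX\mid\bfZ=\bfz)\}}{n_t(\bfz)\,\epsilon^2}, \]
and taking expectations while splitting on $\{n_t(\bfz)\ge M\}\cup\{n_t(\bfz)<M\}$ for large $M$ gives $\bar{\bX}_t(\bfz)=E(\bX\mid\bfZ=\bfz)+o_p(1)$. The same argument applied to $\bX_i\bX_i^T$ and to $\bX_i Y_i$, using finite second moments of $\bX$ and $\bX Y^{(t)}$, yields
\[ \frac{1}{n_t(\bfz)}\sum_{i:I_i=e_t,\bfZ_i=\bfz}\bX_i\bX_i^T \xrightarrow{p} E(\bX\bX^T\mid\bfZ=\bfz),\qquad \frac{1}{n_t(\bfz)}\sum_{i:I_i=e_t,\bfZ_i=\bfz}\bX_i Y_i\xrightarrow{p} E(\bX Y^{(t)}\mid\bfZ=\bfz). \]

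Next I would assemble these pieces via continuous mapping to obtain
\[ \frac{1}{n_t(\bfz)}\sum_{i:I_i=e_t,\bfZ_i=\bfz}\{\bX_i-\bar{\bX}_t(\bfz)\}\{\bX_i-\bar{\bX}_t(\bfz)\}^T\xrightarrow{p}\var(\bX\mid\bfZ=\bfz), \]
\[ \frac{1}{n_t(\bfz)}\sum_{i:I_i=e_t,\bfZ_i=\bfz}\{\bX_i-\bar{\bX}_t(\bfz)\}Y_i\xrightarrow{p}{\rm cov}(\bX,Y^{(t)}\mid\bfZ=\bfz). \]
Since $\var(\bX\mid\bfZ=\bfz)$ is positive definite, matrix inversion is continuous at this limit, so Slutsky's theorem delivers $\hat{\bfbeta}_t(\bfz)=\bfbeta_t(\bfz)+o_p(1)$. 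For $\hat{\bfbeta}(\bfz)$, I would divide both sums by $n(\bfz)$ and regroup by treatment: the denominator becomes $\sum_{t=1}^k\{n_t(\bfz)/n(\bfz)\}\times(\text{within-arm sample variance in arm }t)$, which converges to $\sum_t\pi_t\var(\bX\mid\bfZ=\bfz)=\var(\bX\mid\bfZ=\bfz)$, while the numerator converges to $\sum_t\pi_t\,{\rm cov}(\bX,Y^{(t)}\mid\bfZ=\bfz)$; inverting and multiplying yields $\hat{\bfbeta}(\bfz)\xrightarrow{p}\sum_t\pi_t\bfbeta_t(\bfz)=\bfbeta(\bfz)$.

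The only mildly delicate point is running a law of large numbers with the random index $n_t(\bfz)$. Splitting on $\{n_t(\bfz)\ge M\}$ and its complement handles both regimes, since on the first event the conditional Chebyshev bound is small for large $M$, while $\P\{n_t(\bfz)<M\}\to 0$ under (C3); everything else is routine bookkeeping.
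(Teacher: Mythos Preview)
Your proposal is correct and follows essentially the same route as the paper: condition on $\calA=(\bfZ_1,\ldots,\bfZ_n,I_1,\ldots,I_n)$, note that the relevant sample moments are averages of conditionally i.i.d.\ copies from the law of $(\bX,Y^{(t)})\mid\bfZ=\bfz$, apply a conditional weak law of large numbers, pass to unconditional convergence (your split on $\{n_t(\bfz)\ge M\}$ plays the role of the paper's bounded convergence step), and finish with continuous mapping and Slutsky. One small technical slip: invoking Chebyshev for the average of $\bX_i\bX_i^T$ would require fourth moments of $\bX$, which Theorem~\ref{theo1} does not assume; since the summands are conditionally i.i.d.\ with finite first moment (each entry of $\bX\bX^T$ is integrable by Cauchy--Schwarz from the assumed second moment of $\bX$), Khinchine's weak law suffices there and your splitting argument still goes through unchanged.
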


\begin{proof}[of Lemma~\ref{lemma: coef}] We prove the result for $\hat{\bfbeta}_t(\bfz)$. 
	The proof for  $\hat{\bfbeta}(\bfz)$ is analogous and  omitted. 
	The numerator of $\hat{\bfbeta}_t(\bfz)$ equals
	\[	
	\sum_{i=1}^n  \mI (I_i=e_t, \bfZ_i=\bfz) \bX_i Y_i-\frac{1}{n_t(\bfz)} \sum_{i=1}^n \mI (I_i=e_t, \bfZ_i=\bfz) \bX_i  \sum_{i=1}^n  \mI (I_i=e_t, \bfZ_i=\bfz)  Y_i
	\]
	Conditional on $\calA$, the first term is an average of independent  random variables. Assuming (C2) and existence of the second moment of $\bX Y^{(t)}$, by the weak law of large numbers for independent random variables, we conclude that, for any $\epsilon>0$,
	\[
	\lim_{n\rightarrow \infty} \P\left\{ \frac{1}{n} \mid \sum_{i=1}^n  \mI (I_i=e_t, \bfZ_i=\bfz) \bX_i Y_i- n_t(\bfz)E(\bX Y^{(t)}\mid\bfZ=\bfz) \mid   \geq  \epsilon \, \mid \, \calA\right\}=0,
	\]
	which together with the bounded convergence theorem and $n_t(\bfz)/n=\pi_t \P(\bfZ=\bfz)+o_p(1)$ by (C3) 
	implies that
	\[
	\frac{1}{n_t(\bfz )} \sum_{i=1}^n  \mI (I_i=e_t, \bfZ_i=\bfz) \bX_i Y_i
	=E(\bX Y^{(t)}\mid\bfZ=\bfz) +o_p(1).
	\]
	Similarly, we can show the result with $\bX_iY_i$ replaced by $\bX_i$ or $Y_i$ and, therefore, 
	\begin{align*}
	& \	\frac{1}{n_t(\bfz )} \sum_{i=1}^n   \mI (I_i=e_t, \bfZ_i=\bfz) \bX_i Y_i-
	\frac{1}{n_t^2(\bfz )} \sum_{i=1}^n \mI (I_i=e_t, \bfZ_i=\bfz) \bX_i  \sum_{i=1}^n  \mI (I_i=e_t, \bfZ_i=\bfz)  Y_i	\\
	=& \ \{ E(\bX Y^{(t)}\mid\bfZ=\bfz) + o_p(1) \}- \{E(\bX \mid\bfZ=\bfz) +o_p(1)\} \{E( Y^{(t)} \mid \bfZ = \bfz ) +o_p(1) \}\\
	=&  \ {\rm cov} ( \bX, Y^{(t)} \mid\bfZ=\bfz) + o_p(1) . 
	\end{align*}
	The denominator of $\hat{\bfbeta}_t(\bfz)$ can be treated similarly, which leads to  
	$$
	\frac{1}{n_t(\bfz)} \sum_{ i: I_i =e_t, \bfZ_i = \bfz } \{\bX_i - \bar{\bX}_t(\bfz )\}\{\bX_i - \bar{\bX}_t(\bfz )\}^T = \var(\bX\mid\bfZ=\bfz)+o_p(1).
	$$
	The proof is completed by using the definition of $ \bfbeta_t(\bfz)$. 
\end{proof}

Next, consider $\hat{\theta}_B$. Let 
$$ U_B = \sum_{\bfz \in {\cal Z}} \frac{n(\bfz)}{n} \left[ \bar{Y}_t(\bfz) -\bar{Y}_s(\bfz) - \{\bar{\bX}_t(\bfz) -\bar{\bX}_s(\bfz)\}^T  \bfbeta (\bfz) -\{ E(Y^{(t)}\mid\bfZ=\bfz) -E(Y^{(s)}\mid\bfZ=\bfz) \}\right]$$
Similar to the decomposition for $\hat\theta - \theta$ in (\ref{suppeq: U})-(\ref{suppeq: V}), we have the following  decomposition for $ \hat\theta_B- \theta $:
$$
\hat\theta_B - \theta = U_B + V 
- \sum_{\bfz \in {\cal Z}} \frac{n(\bfz)}{n} \{\bar{\bX}_1(\bfz) -\bar{\bX}_0(\bfz)\}^T 
\{\hat{\bfbeta} (\bfz ) - \bfbeta (\bfz ) \}
$$
The last term is $o_p(n^{-1/2})$ because $\hat{\bfbeta} (\bfz) -\bfbeta(\bfz)=o_p(1)$ by Lemma 1 and $\bar{\bX}_1(\bfz) -\bar{\bX}_0(\bfz)=O_p(n^{-1/2})$ that can be shown using a similar technique in  the proof of  $ \hat\theta $. 
To derive the asymptotic distribution of  $U_B$, we apply the same techniques used to treat $U$ in the proof of $ \hat\theta $, i.e., conditioned on $\calA$, $U_B$ can be shown to be an average of independent  terms so that conditioned on $\calA$, $U_B$ is asymptotically normal and, therefore, unconditionally it is also asymptotically normal. Since $E(U_B\mid\calA)=0$, it remains to find the conditional variance of $U_B$ given $\calA$, which is 
\begin{align}
\var(\sqrt{n}U_B\mid\calA) &= \sum_{\bfz\in {\cal Z}} \frac{n(\bfz)}{n} \! \!\left[  \frac{n(\bfz)}{n_t(\bfz)} \var\{Y^{(t)}- \bX^T \! \bfbeta (\bfz) \mid\bfZ=\bfz\}\!+\! \frac{n(\bfz)}{n_s(\bfz)} \var\{Y^{(s)}- \bX^T \! \bfbeta (\bfz)  \mid\bfZ=\bfz\}\!\right] \nonumber \\
&= E\left[ \frac{1}{\pi_t} \var\{Y^{(t)} - \bX^T \bfbeta (\bfZ)\mid\bfZ\}+\frac{1}{\pi_s} \var\{Y^{(s)} - \bX^T \bfbeta (\bfZ)\mid\bfZ\}\right]+o_p(1)\nonumber \\
& = \sigma_B^2 + o_p(1)\nonumber
\end{align}
This completes the proof of $ \hat\theta_B $.

Next, we consider $\hat{\theta}_A$. Define 
$$
U_A =  \sum_{\bfz \in {\cal Z}} \frac{n(\bfz)}{n} \sum_{\ell=t,s} (-1)^{\ell=s} \left[ 
\bar{Y}_\ell(\bfz) -   \{\bar{\bX}_\ell(\bfz) -\bar{\bX}(\bfz)\}^T \bfbeta_\ell(\bfz) - E(Y^{(\ell)} \mid\bfZ=\bfz)  \right] 
$$
Then, 
$$
\hat\theta_A - \theta = U_A + V - \sum_{\bfz \in {\cal Z}} \frac{n(\bfz)}{n}\sum_{\ell=t, s} (-1)^{\ell=s} \{(\bar{\bX}_\ell(\bfz) -\bar{\bX}(\bfz)\}^T  \{\hat{\bfbeta}_\ell(\bfz)-\bfbeta_\ell(\bfz)\}  
$$
where the last term is  $o_p(n^{-1/2})$ because $\hat{\bfbeta}_t(\bfz) -\bfbeta_t(\bfz)=o_p(1)$ by Lemma \ref{lemma: coef} and $\bar{\bX}_j(\bfz)- \bar{\bX}(\bfz)=O_p(n^{-1/2})$. It remains to derive the asymptotic distribution of $U_A$. Consider a further decomposition 
$$ U_A = U_{A1} + U_{A2}$$
where
\begin{align*}
U_{A1} = & \sum_{\bfz \in {\cal Z}} \frac{n(\bfz)}{n} \sum_{\ell=t,s} (-1)^{\ell=s} 
[ \bar{Y}_\ell(\bfz)-   E(Y^{(\ell)}\mid\bfZ=\bfz) -  \{\bar{\bX}_\ell(\bfz) - E(\bX\mid\bfZ=\bfz)\}^T \bfbeta_\ell (\bfz )]  \\
U_{A2} = & \sum_{\bfz \in {\cal Z}} \frac{n(\bfz)}{n} \{\bar{\bX}(\bfz)- E(\bX\mid\bfZ=\bfz)\}^T   \left\{  \bfbeta_t(\bfz) -\bfbeta_s(\bfz)\right\}
\end{align*}
and ${\rm cov} (U_{A1}, U_{A2}\mid\calA)= E(U_{A1} U_{A2}\mid\calA) = 0$, which can be seen from ${\rm cov} \{Y_{i}^{(t)}-\bX_i\bfbeta_t(\bfz) , \bX_j\mid\bfZ_i=\bfz, \bfZ_j\}=0$ a.s. for any $j, i=1, \ldots, n$, and $ t=1,\ldots, k $.
The asymptotic normality of $U_{A1}$ can be derived in the same way as that for $U_B$ with $E(U_{A1}\mid \calA )=0$ and 
$$
\var(\sqrt{n}U_{A1}\mid\calA) = E\left[ \frac{1}{\pi_t} \var\{Y^{(t)} -  \bX^T \bfbeta_t(\bfz)\mid\bfZ\} +\frac{1}{\pi_s} \var\{Y^{(s)}- \bX^T \bfbeta_s(\bfz) \mid\bfZ\}\right]+o_p(1).$$
Note that $U_{A2}$ is an average of independent and identically distributed terms and its asymptotic normality follows directly from the Central Limit Theorem. Specifically,
$\sqrt{n}U_{A2}$ converges in distribution to normal with mean 0 and variance 
$E\left[ \{\bfbeta_t(\bfZ) - \bfbeta_s(\bfZ)\}^T \var(\bX\mid\bfZ)\{\bfbeta_t(\bfZ) - \bfbeta_s(\bfZ)\}\right].$ This proves the first result for $ \hat\theta_A $
because the sum of the asymptotic variances of $U_{A1}$ and $U_{A2}$ is exactly $\sigma^2_A$.  The proof is completed.

\subsection*{Proof of Theorem \ref{theo2}}
Define $\Sigma (\bfz ) = {\rm var} ( \bX \mid \bfZ= \bfz )$, 
\begin{align*}
\sigma_A^2(\bfz ) = & \frac{\var (Y^{(t)}- \bX^T \! \bfbeta_t(\bfz )\mid\bfZ=\bfz)}{\pi_t} + \frac{\var (Y^{(s)}- \bX^T \! \bfbeta_s(\bfz )\mid\bfZ=\bfz)}{\pi_s} \\
& + \{\beta_t(\bfz)- \bfbeta_s(\bfz)\}^T  \Sigma(\bfz)\{\bfbeta_t(\bfz)- \bfbeta_s(\bfz)\}
\\
\sigma_U^2 (\bfz ) = & \frac{\var (Y^{(t)}\mid\bfZ=\bfz)}{\pi_t} + \frac{\var (Y^{(s)}\mid\bfZ=\bfz)}{\pi_s}  
\end{align*}
Then
$\sigma^2_A = \sum_{\bfz \in {\cal Z}} \P(\bfZ = \bfz ) \sigma_A^2(\bfz )$ and $ \sigma^2_U = \sum_{\bfz \in {\cal Z}} \P(\bfZ = \bfz ) \sigma_U^2(\bfz ) $.
Hence, the result for $ \sigma_U^2- \sigma_A^2 $ follows from 
\begin{align*}
&\sigma_A^2(\bfz ) - \sigma_U^2 (\bfz ) \nonumber\\
= & \, \frac{\{\bfbeta_t(\bfz)\}^T \Sigma(\bfz) \bfbeta_t(\bfz) -2{\rm cov}(  \bX^T \! \bfbeta_t(\bfz), Y^{(t)}\mid\bfZ=\bfz) }{\pi_t}\nonumber\\
& + \frac{ \{\bfbeta_s(\bfz)\}^T \Sigma(\bfz) \bfbeta_s(\bfz) -2{\rm cov}(  \bX^T \! \bfbeta_s(\bfz), Y^{(s)}\mid\bfZ=\bfz)}{\pi_s}+ \{ \bfbeta_t(\bfz)- \bfbeta_s(\bfz)\}^T \Sigma(\bfz)\{\bfbeta_t(\bfz)-\bfbeta_s(\bfz)\} \vspace{2mm}\nonumber \\
= & \, \frac{ \{\bfbeta_t(\bfz)\}^T \Sigma(\bfz) \bfbeta_t(\bfz) -2 \{\bfbeta_t(\bfz)\}^T \Sigma(\bfz) \bfbeta_t(\bfz)}{\pi_t}\nonumber\\
& + \frac{\{\bfbeta_s(\bfz)\}^T \Sigma(\bfz) \bfbeta_s(\bfz) - 2 \{\bfbeta_s(\bfz)\}^T \Sigma(\bfz) \bfbeta_s(\bfz)  }{\pi_s}   + \{ \bfbeta_t(\bfz)- \bfbeta_s(\bfz)\}^T \Sigma(\bfz)\{\bfbeta_t(\bfz)-\bfbeta_s(\bfz)\} \vspace{2mm}\nonumber \\
=&\,  - \frac{ \{\bfbeta_t(\bfz)\}^T \Sigma(\bfz) \bfbeta_t(\bfz) }{\pi_t} -\frac{ \{\bfbeta_s(\bfz)\}^T \Sigma(\bfz) \bfbeta_s(\bfz) }{\pi_s}+\{ \bfbeta_t(\bfz)- \bfbeta_s(\bfz)\}^T \Sigma(\bfz)\{\bfbeta_t(\bfz)-\bfbeta_s(\bfz)\}    \vspace{2mm}\\
= &\,   -\frac{\{\pi_s \bfbeta_t(\bfz)+\pi_t\bfbeta_s(\bfz)\}^T \Sigma(\bfz)\{\pi_s \bfbeta_t(\bfz)+\pi_t\bfbeta_s(\bfz)\}}{\pi_t\pi_s (\pi_t+\pi_s)}\vspace{2mm}  \\
&-\{ \bfbeta_t(\bfz)- \bfbeta_s(\bfz)\}^T \Sigma(\bfz)\{\bfbeta_t(\bfz)-\bfbeta_s(\bfz)\} \left(\frac{1-\pi_t-\pi_s}{\pi_t+\pi_s}\right)\nonumber
\end{align*}
where the second equality follows  from $\bfbeta_t (\bfz ) = \Sigma(\bfz )^{-1} {\rm cov} (\bX , Y^{(t)} \mid \bfZ = \bfz )$.  
This also proves that $\sigma_A^2 \leq \sigma_U^2$, because 
$\Sigma (\bfz )$ is positive definite for every $\bfz$ and $ \pi_t+\pi_s\leq 1 $. If $\sigma_A^2 = \sigma_U^2$, then we must have $\pi_s \bfbeta_t(\bfz)+\pi_t\bfbeta_s(\bfz) =0$ and $(1-\pi_t-\pi_s) \{\beta_t(z)- \beta_s(z)\}=0  $  for every $\bfz$, which is the same as (\ref{equiv1}).

To show the result for $ \sigma_B^2- \sigma_A^2 $ , note that 
$ \sigma^2_B =\sum_{\bfz \in {\cal Z}} \P(\bfZ = \bfz ) \sigma_B^2(\bfz )$,
where 
\begin{align*}
\sigma^2_B (\bfz )=& 
\frac{\var \{Y^{(t)}- \bX^T \! \bfbeta (\bfz )\mid\bfZ=\bfz\}}{\pi_t} + 
\frac{\var \{Y^{(s)}- \bX^T \! \bfbeta (\bfz )\mid\bfZ=\bfz\}}{\pi_s}\\
=& \frac{\var \{Y^{(t)}- \bX^T \! \bfbeta_t(z)+ \bX^T \! \bfbeta_t(z)- \bX^T \! \bfbeta (\bfz )\mid\bfZ=\bfz\}}{\pi_t} \\
&+  \frac{\var \{Y^{(s)}- \bX^T \! \bfbeta_s(z)+ \bX^T \! \bfbeta_s(z)- \bX^T \! \bfbeta (\bfz )\mid\bfZ=\bfz\}}{\pi_s} \\
=&\frac{\var \{Y^{(t)}- \bX^T \! \bfbeta_t(z)\mid\bfZ=\bfz\}+ \var\{\bX^T \! \bfbeta_t(z)- \bX^T \! \bfbeta (\bfz )\mid\bfZ=\bfz\}}{\pi_t} \\
&+ \frac{\var \{Y^{(s)}- \bX^T \! \bfbeta_s(z) \mid Z=z\}+\var\{ \bX^T \! \bfbeta_s(z)- \bX^T \! \bfbeta (\bfz )\mid\bfZ=\bfz\}}{\pi_s} 
\end{align*}
where the second equality is because 
\begin{align*}
&{\rm cov} \{ Y^{(t)}- \bfbeta_t(z)^T X, \bfbeta_t(z)^T X- \bfbeta(z)^T X\mid Z=z  \}\\
&={\rm cov} \{ Y^{(t)}-\bfbeta_t(z)^T X, \bX \mid Z=z  \} \{\bfbeta_t(z)-\bfbeta (\bfz )\} \\
&= \{ {\rm cov} ( Y^{(t)}, \bX  \mid Z=z  )- \bfbeta_t(z)^T \var( \bX  \mid Z=z)\} \{\bfbeta_t(z)-\bfbeta (\bfz )\} =0
\end{align*}

Then,
\begin{align*}
&\sigma_A^2(z)- \sigma_B^2(z)=  \{\beta_t(\bfz)- \bfbeta_s(\bfz)\}^T  \Sigma(\bfz)\{\bfbeta_t(\bfz)- \bfbeta_s(\bfz)\}\\
&-\frac{ \{ \beta_t(z)- \beta(z)\}^T\Sigma(z)\{ \beta_t(z)- \beta(z)\}}{\pi_t}- \frac{ \{ \beta_s(z)- \beta(z)\}^T\Sigma(z)\{ \beta_s(z)- \beta(z)\}}{\pi_s}
\end{align*}

In order to show that $ \sigma_A^2(z)-\sigma_B^2(z)\leq 0  $ for every $ z $, we prove a stronger statement:  for each given $ z $,  it is true that for any $ \tilde\beta(z) $,
\begin{align}
&  \{\beta_t(\bfz)- \bfbeta_s(\bfz)\}^T  \Sigma(\bfz)\{\bfbeta_t(\bfz)- \bfbeta_s(\bfz)\}\nonumber\\
&-\frac{ \{ \beta_t(z)-\tilde \beta(z)\}^T\Sigma(z)\{ \beta_t(z)- \tilde\beta(z)\}}{\pi_t}- \frac{ \{ \beta_s(z)- \tilde\beta(z)\}^T\Sigma(z)\{ \beta_s(z)-\tilde \beta(z)\}}{\pi_s}\leq 0. \label{eq: state}
\end{align}
As a consequence, setting $ \tilde\beta(z) $ as $ \beta(z) = \sum_t\pi_t\beta_t(z)$, the statement in (\ref{eq: state}) also holds. This proves $ \sigma_A^2(z)- \sigma_B^2(z)\leq 0 $. 

In what follows, we prove the claim in (\ref{eq: state}). For each given $ z $, the gradient of the left hand side of (\ref{eq: state}) is 
\begin{align*}
- 2\left[ \frac{\{\tilde\beta(z)- \beta_t(z) \}^T \Sigma(z) }{\pi_t}+\frac{\{\tilde\beta(z)- \beta_s(z) \}^T \Sigma(z) }{\pi_s} \right],
\end{align*}
which equals zero when $ \tilde\beta(z)= \{ \pi_s\beta_t(z)+ \pi_t\beta_s(z)\}/(\pi_t+\pi_s) $. This is also the unique solution from the positive definiteness of $ \Sigma(z) $. It is also easy to see that the Hessian of the left hand side of (\ref{eq: state}) is negative definite, which means that $ \tilde\beta(z)= \{ \pi_s\beta_t(z)+ \pi_t\beta_s(z)\}/(\pi_t+\pi_s) $ is the global and unique maximizer of the left hand side of (\ref{eq: state}). The statement in (\ref{eq: state}) is true because when evaluated at $ \tilde\beta(z)= \{ \pi_s\beta_t(z)+ \pi_t\beta_s(z)\}/(\pi_t+\pi_s) $, the left hand side of (\ref{eq: state}) equals
\begin{align*}
&  \{\beta_t(\bfz)- \bfbeta_s(\bfz)\}^T  \Sigma(\bfz)\{\bfbeta_t(\bfz)- \bfbeta_s(\bfz)\}\nonumber\\
&- \left\{\beta_t(z)- \frac{\pi_s\beta_t(z) + \pi_t\beta_s(z)}{\pi_t+\pi_s}\right\}^T \Sigma(z)  \left\{\beta_t(z)- \frac{\pi_s\beta_t(z) + \pi_t\beta_s(z)}{\pi_t+\pi_s}\right\}\frac{1}{\pi_t}\\
&- \left\{\beta_s(z)- \frac{\pi_s\beta_t(z) + \pi_t\beta_s(z)}{\pi_t+\pi_s}\right\}^T \Sigma(z)  \left\{\beta_s(z)- \frac{\pi_s\beta_t(z) + \pi_t\beta_s(z)}{\pi_t+\pi_s}\right\}\frac{1}{\pi_s}\\
=& -\{\beta_t(\bfz)- \bfbeta_s(\bfz)\}^T  \Sigma(\bfz)\{\bfbeta_t(\bfz)- \bfbeta_s(\bfz)\} \left(\frac{1-\pi_t-\pi_s}{\pi_t+\pi_s}\right)\leq 0\nonumber
\end{align*}
This completes the proof for $ \sigma_A^2\leq \sigma_B^2 $, where the equality holds if and only if $\{\beta_t(z) - \beta_s(z)\}(1- \pi_t-\pi_s)=0$ and $ \sum_{\ell=1}^k \pi_\ell\beta_\ell(z)= \{ \pi_s\beta_t(z)+ \pi_t\beta_s(z)\}/(\pi_t+\pi_s)  $ for every $ z $, which is the same as (\ref{equiv2}).

\subsection*{Proof of the Consistency of Variance Estimators}

First, we prove the consistency of estimators $ \hat\sigma_U^2, \hat\sigma_V^2, \hat\sigma_A^2$ and  $\hat\sigma_B^2 $. For $\hat{\sigma}_U^2$, following the same arguments as in the proof of Lemma 1, we can show that 
$$S_t^2(\bfz)=\var(Y^{(t)}\mid\bfZ=\bfz)+o_p(1)$$  From $n(\bfz)/n=\P(\bfZ=\bfz)+o_p(1)$,  we conclude that  $\hat{\sigma}_U^2=\sigma_U^2+o_p(1)$ by continuous mapping theorem. For $\hat{\sigma }_V^2$, following the same arguments as in the proof of Lemma 1, we conclude that 
$$\bar{Y}_t(\bfz)=E(Y^{(t)}\mid\bfZ=\bfz)+o_p(1), $$
which together with  the fact that $\hat\theta=\theta+o_p(1)$ implied by Theorem 1, we obtain  that 
\begin{align*}
\hat\sigma_V^2 & = \sum_{z\in {\cal Z}} \P(\bfZ=\bfz) \left\{ E(Y^{(t)}-Y^{(s)}\mid\bfZ=\bfz)\right\}^2-\theta^2+o_p(1) \\
& = E\left[\left\{ E(Y^{(t)}-Y^{(s)}\mid\bfZ)\right\}^2\right]- \{E(Y^{(t)}-Y^{(s)})\}^2+o_p(1)\\
& =\sigma_V^2+o_p(1).
\end{align*}
The proof for the consistency of $ \hat\sigma_A^2 $ and $ \hat\sigma_B^2 $ with $\hat{\bfbeta}_t(\bfz), \hat\beta(z)$ respectively replaced with $\bfbeta_t(\bfz), \beta(z)$ is analogous and is  omitted. From this,  the consistency of  $ \hat\sigma_A^2 $ and $ \hat\sigma_B^2 $  can be  established using Lemma \ref{lemma: coef}.


\section*{Additional Simulation Results}
Tables S1 and S2 report the bias, standard deviation (SD), average estimated SD (SE), and coverage probability (CP) of asymptotic 95\% confidence interval, estimate $\pm 1.96 \,$SE, of $\hat\theta$, $\hat\theta_A$, and $\hat\theta_B$ for cases I-III under stratified permuted block randomization and urn design, respectively, based on 2,000 simulation runs.

\begin{table}[h!]
	\begin{center}
		\caption{Bias, standard deviation (SD), average estimated SD (SE), and coverage probability (CP) of 95\% asymptotic confidence interval under Stratified Permuted Block Randomization for cases I-III (block size = 4 for 1:1 allocation and block size = 6 for 1:2 allocation)}
		\resizebox{0.9\textwidth}{!}{
			\begin{tabular}{cccccccccccccc}
				\hline \\[-1.5ex]
				&      &              &                &  & \multicolumn{4}{c}{treatment   allocation 1:1} &  & \multicolumn{4}{c}{treatment   allocation 1:2} \\[0.5ex]
				\cline{6-9} \cline{11-14} \\[-1.5ex]
				$n$  & case & $Z$          & estimator      &  & bias       & SD        & SE        & CP        &  & bias       & SD        & SE        & CP        \\[0.5ex]
				\hline \\[-1.5ex]
				500 & I    & $X_1$        & $\hat\theta$   &  & 0.0042     & 0.2005    & 0.2000    & 0.9495    &  & 0.0011     & 0.2161    & 0.2124    & 0.9500    \\
				&      &              & $\hat\theta_B$ &  & 0.0016     & 0.0887    & 0.0894    & 0.9515    &  & -0.0001    & 0.0967    & 0.0948    & 0.9480    \\
				&      &              & $\hat\theta_A$ &  & 0.0016     & 0.0887    & 0.0894    & 0.9520    &  & 0.0000     & 0.0969    & 0.0947    & 0.9495    \\
				&      & $X_1$, $D_2$ & $\hat\theta$   &  & -0.0052    & 0.1473    & 0.1467    & 0.9455    &  & 0.0020     & 0.1552    & 0.1556    & 0.9530    \\
				&      &              & $\hat\theta_B$ &  & -0.0004    & 0.0907    & 0.0894    & 0.9430    &  & 0.0014     & 0.0966    & 0.0948    & 0.9445    \\
				&      &              & $\hat\theta_A$ &  & -0.0003    & 0.0908    & 0.0894    & 0.9430    &  & 0.0015     & 0.0964    & 0.0946    & 0.9440    \\
				&      & $X_1$, $D_4$ & $\hat\theta$   &  & 0.0024     & 0.1121    & 0.1152    & 0.9520    &  & 0.0009     & 0.1256    & 0.1220    & 0.9420    \\
				&      &              & $\hat\theta_B$ &  & 0.0009     & 0.0884    & 0.0895    & 0.9430    &  & -0.0026    & 0.0992    & 0.0947    & 0.9360    \\
				&      &              & $\hat\theta_A$ &  & 0.0009     & 0.0885    & 0.0895    & 0.9465    &  & -0.0025    & 0.1002    & 0.0944    & 0.9320    \\
				& II   & $X_1$        & $\hat\theta$   &  & 0.0004     & 0.2168    & 0.2191    & 0.9505    &  & 0.0030     & 0.2329    & 0.2304    & 0.9400    \\
				&      &              & $\hat\theta_B$ &  & 0.0002     & 0.2173    & 0.2184    & 0.9485    &  & 0.0029     & 0.2600    & 0.2543    & 0.9415    \\
				&      &              & $\hat\theta_A$ &  & 0.0005     & 0.2169    & 0.2190    & 0.9510    &  & 0.0034     & 0.2247    & 0.2214    & 0.9405    \\
				&      & $X_1$, $D_2$ & $\hat\theta$   &  & 0.0039     & 0.2199    & 0.2190    & 0.9500    &  & 0.0006     & 0.2275    & 0.2255    & 0.9405    \\
				&      &              & $\hat\theta_B$ &  & 0.0037     & 0.2207    & 0.2177    & 0.9500    &  & 0.0004     & 0.2393    & 0.2350    & 0.9410    \\
				&      &              & $\hat\theta_A$ &  & 0.0038     & 0.2203    & 0.2189    & 0.9490    &  & 0.0007     & 0.2241    & 0.2216    & 0.9440    \\
				&      & $X_1$, $D_4$ & $\hat\theta$   &  & 0.0020     & 0.2131    & 0.2193    & 0.9570    &  & -0.0046    & 0.2246    & 0.2230    & 0.9465    \\
				&      &              & $\hat\theta_B$ &  & 0.0021     & 0.2140    & 0.2176    & 0.9515    &  & -0.0058    & 0.2314    & 0.2253    & 0.9415    \\
				&      &              & $\hat\theta_A$ &  & 0.0017     & 0.2130    & 0.2190    & 0.9560    &  & -0.0034    & 0.2239    & 0.2211    & 0.9405    \\
				& III  & $X_1$        & $\hat\theta$   &  & 0.0046     & 0.1737    & 0.1732    & 0.9500    &  & 0.0028     & 0.1674    & 0.1666    & 0.9465    \\
				&      &              & $\hat\theta_B$ &  & 0.0048     & 0.1485    & 0.1477    & 0.9490    &  & 0.0037     & 0.1688    & 0.1656    & 0.9395    \\
				&      &              & $\hat\theta_A$ &  & 0.0049     & 0.1484    & 0.1479    & 0.9500    &  & 0.0057     & 0.1550    & 0.1532    & 0.9455    \\
				&      & $X_1$, $D_2$ & $\hat\theta$   &  & 0.0028     & 0.1615    & 0.1593    & 0.9450    &  & 0.0016     & 0.1620    & 0.1597    & 0.9455    \\
				&      &              & $\hat\theta_B$ &  & 0.0083     & 0.1501    & 0.1471    & 0.9425    &  & 0.0044     & 0.1626    & 0.1579    & 0.9380    \\
				&      &              & $\hat\theta_A$ &  & 0.0064     & 0.1499    & 0.1475    & 0.9440    &  & 0.0039     & 0.1562    & 0.1526    & 0.9380    \\
				&      & $X_1$, $D_4$ & $\hat\theta$   &  & 0.0033     & 0.1500    & 0.1527    & 0.9540    &  & -0.0012    & 0.1565    & 0.1560    & 0.9500    \\
				&      &              & $\hat\theta_B$ &  & 0.0062     & 0.1450    & 0.1469    & 0.9515    &  & 0.0001     & 0.1586    & 0.1540    & 0.9410    \\
				&      &              & $\hat\theta_A$ &  & 0.0031     & 0.1448    & 0.1475    & 0.9520    &  & -0.0006    & 0.1558    & 0.1521    & 0.9400    \\
				&      &              &                &  &            &           &           &           &  &            &           &           &           \\[-1.5ex]
				100 & I    & $X_1$        & $\hat\theta$   &  & -0.0019    & 0.4505    & 0.4484    & 0.9460    &  & -0.0087    & 0.4690    & 0.4741    & 0.9500    \\
				&      &              & $\hat\theta_B$ &  & -0.0004    & 0.2004    & 0.1977    & 0.9350    &  & 0.0013     & 0.2132    & 0.2094    & 0.9420    \\
				&      &              & $\hat\theta_A$ &  & -0.0005    & 0.2004    & 0.1978    & 0.9365    &  & 0.0013     & 0.2147    & 0.2084    & 0.9395    \\
				&      & $X_1$, $D_2$ & $\hat\theta$   &  & -0.0072    & 0.3305    & 0.3314    & 0.9500    &  & -0.0043    & 0.3414    & 0.3498    & 0.9500    \\
				&      &              & $\hat\theta_B$ &  & -0.0003    & 0.2036    & 0.1983    & 0.9415    &  & 0.0018     & 0.2184    & 0.2094    & 0.9340    \\
				&      &              & $\hat\theta_A$ &  & -0.0017    & 0.2059    & 0.1997    & 0.9380    &  & 0.0054     & 0.2387    & 0.2113    & 0.9165    \\
				& II   & $X_1$        & $\hat\theta$   &  & 0.0151     & 0.4930    & 0.4901    & 0.9500    &  & 0.0167     & 0.5103    & 0.5140    & 0.9445    \\
				&      &              & $\hat\theta_B$ &  & 0.0136     & 0.4992    & 0.4827    & 0.9430    &  & 0.0230     & 0.5728    & 0.5607    & 0.9405    \\
				&      &              & $\hat\theta_A$ &  & 0.0149     & 0.4945    & 0.4892    & 0.9470    &  & 0.0152     & 0.4938    & 0.4929    & 0.9455    \\
				&      & $X_1$, $D_2$ & $\hat\theta$   &  & 0.0112     & 0.4953    & 0.4916    & 0.9475    &  & 0.0209     & 0.5070    & 0.5035    & 0.9435    \\
				&      &              & $\hat\theta_B$ &  & 0.0085     & 0.5059    & 0.4781    & 0.9320    &  & 0.0223     & 0.5428    & 0.5130    & 0.9310    \\
				&      &              & $\hat\theta_A$ &  & 0.0102     & 0.4976    & 0.4907    & 0.9415    &  & 0.0226     & 0.5105    & 0.4951    & 0.9410    \\
				& III  & $X_1$        & $\hat\theta$   &  & 0.0033     & 0.3877    & 0.3869    & 0.9485    &  & 0.0016     & 0.3668    & 0.3714    & 0.9460    \\
				&      &              & $\hat\theta_B$ &  & 0.0110     & 0.3325    & 0.3262    & 0.9465    &  & 0.0175     & 0.3692    & 0.3651    & 0.9365    \\
				&      &              & $\hat\theta_A$ &  & 0.0116     & 0.3306    & 0.3287    & 0.9490    &  & 0.0212     & 0.3413    & 0.3390    & 0.9380    \\
				&      & $X_1$, $D_2$ & $\hat\theta$   &  & -0.0006    & 0.3642    & 0.3578    & 0.9430    &  & 0.0071     & 0.3547    & 0.3575    & 0.9485    \\
				&      &              & $\hat\theta_B$ &  & 0.0163     & 0.3404    & 0.3241    & 0.9285    &  & 0.0240     & 0.3653    & 0.3456    & 0.9375    \\
				&      &              & $\hat\theta_A$ &  & 0.0079     & 0.3394    & 0.3296    & 0.9365    &  & 0.0256     & 0.3701    & 0.3400    & 0.9330    \\[0.5ex] 
				\hline
		\end{tabular}}
	\end{center}
\end{table}

\begin{table}[h!]
	\begin{center}
		\caption{Bias, standard deviation (SD), average estimated SD (SE), and coverage probability (CP) of 95\% asymptotic confidence interval under Stratified Urn Design for cases I-III}
		\resizebox{0.9\textwidth}{!}{
			\begin{tabular}{cccccccccccccc}
				\hline \\[-1.5ex]
				&      &              &                &  & \multicolumn{4}{c}{treatment allocation   1:1} &  & \multicolumn{4}{c}{treatment allocation   1:2} \\[0.5ex]
				\cline{6-9} \cline{11-14} \\[-1.5ex]
				$n$   & case & $Z$          & estimator      &  & bias       & SD        & SE        & CP        &  & bias       & SD        & SE        & CP        \\ [0.5ex]
				\hline \\[-1.5ex]
				500 & I    & $X_1$        & $\hat\theta$   &  & -0.0095    & 0.1984    & 0.2000    & 0.9465    &  & 0.0012     & 0.2119    & 0.2121    & 0.9470    \\
				&      &              & $\hat\theta_B$ &  & -0.0052    & 0.0866    & 0.0893    & 0.9585    &  & -0.0012    & 0.0957    & 0.0948    & 0.9455    \\
				&      &              & $\hat\theta_A$ &  & -0.0052    & 0.0867    & 0.0893    & 0.9580    &  & -0.0013    & 0.0957    & 0.0947    & 0.9445    \\
				&      & $X_1$, $D_2$ & $\hat\theta$   &  & 0.0001     & 0.1486    & 0.1466    & 0.9430    &  & 0.0049     & 0.1560    & 0.1554    & 0.9510    \\
				&      &              & $\hat\theta_B$ &  & -0.0004    & 0.0902    & 0.0894    & 0.9405    &  & 0.0020     & 0.0948    & 0.0947    & 0.9440    \\
				&      &              & $\hat\theta_A$ &  & -0.0004    & 0.0902    & 0.0894    & 0.9420    &  & 0.0022     & 0.0953    & 0.0945    & 0.9435    \\
				&      & $X_1$, $D_4$ & $\hat\theta$   &  & -0.0001    & 0.1160    & 0.1150    & 0.9470    &  & 0.0008     & 0.1216    & 0.1220    & 0.9520    \\
				&      &              & $\hat\theta_B$ &  & -0.0019    & 0.0927    & 0.0893    & 0.9390    &  & -0.0013    & 0.0966    & 0.0947    & 0.9445    \\
				&      &              & $\hat\theta_A$ &  & -0.0018    & 0.0930    & 0.0894    & 0.9390    &  & -0.0017    & 0.0993    & 0.0947    & 0.9405    \\
				& II   & $X_1$        & $\hat\theta$   &  & 0.0037     & 0.2196    & 0.2192    & 0.9510    &  & -0.0040    & 0.2331    & 0.2305    & 0.9480    \\
				&      &              & $\hat\theta_B$ &  & 0.0033     & 0.2203    & 0.2186    & 0.9495    &  & -0.0046    & 0.2596    & 0.2544    & 0.9465    \\
				&      &              & $\hat\theta_A$ &  & 0.0034     & 0.2196    & 0.2191    & 0.9495    &  & -0.0034    & 0.2230    & 0.2214    & 0.9475    \\
				&      & $X_1$, $D_2$ & $\hat\theta$   &  & 0.0083     & 0.2200    & 0.2191    & 0.9530    &  & -0.0012    & 0.2261    & 0.2253    & 0.9450    \\
				&      &              & $\hat\theta_B$ &  & 0.0085     & 0.2216    & 0.2180    & 0.9465    &  & -0.0015    & 0.2382    & 0.2346    & 0.9450    \\
				&      &              & $\hat\theta_A$ &  & 0.0082     & 0.2205    & 0.2190    & 0.9480    &  & 0.0001     & 0.2227    & 0.2213    & 0.9455    \\
				&      & $X_1$, $D_4$ & $\hat\theta$   &  & 0.0071     & 0.2214    & 0.2193    & 0.9475    &  & 0.0068     & 0.2246    & 0.2230    & 0.9460    \\
				&      &              & $\hat\theta_B$ &  & 0.0080     & 0.2226    & 0.2176    & 0.9450    &  & 0.0068     & 0.2306    & 0.2254    & 0.9415    \\
				&      &              & $\hat\theta_A$ &  & 0.0068     & 0.2223    & 0.2190    & 0.9465    &  & 0.0066     & 0.2243    & 0.2213    & 0.9440    \\
				& III  & $X_1$        & $\hat\theta$   &  & -0.0037    & 0.1724    & 0.1731    & 0.9470    &  & -0.0015    & 0.1659    & 0.1666    & 0.9530    \\
				&      &              & $\hat\theta_B$ &  & -0.0003    & 0.1480    & 0.1477    & 0.9495    &  & -0.0014    & 0.1694    & 0.1657    & 0.9385    \\
				&      &              & $\hat\theta_A$ &  & -0.0002    & 0.1478    & 0.1479    & 0.9525    &  & 0.0008     & 0.1548    & 0.1533    & 0.9475    \\
				&      & $X_1$, $D_2$ & $\hat\theta$   &  & 0.0029     & 0.1594    & 0.1593    & 0.9510    &  & 0.0013     & 0.1596    & 0.1594    & 0.9465    \\
				&      &              & $\hat\theta_B$ &  & 0.0052     & 0.1492    & 0.1470    & 0.9495    &  & 0.0029     & 0.1588    & 0.1576    & 0.9420    \\
				&      &              & $\hat\theta_A$ &  & 0.0031     & 0.1487    & 0.1474    & 0.9510    &  & 0.0028     & 0.1530    & 0.1523    & 0.9420    \\
				&      & $X_1$, $D_4$ & $\hat\theta$   &  & 0.0012     & 0.1534    & 0.1525    & 0.9455    &  & 0.0032     & 0.1569    & 0.1558    & 0.9515    \\
				&      &              & $\hat\theta_B$ &  & 0.0049     & 0.1508    & 0.1468    & 0.9445    &  & 0.0060     & 0.1572    & 0.1539    & 0.9460    \\
				&      &              & $\hat\theta_A$ &  & 0.0014     & 0.1510    & 0.1473    & 0.9455    &  & 0.0039     & 0.1554    & 0.1521    & 0.9520    \\
				&      &              &                &  &            &           &           &           &  &            &           &           &           \\[-1.5ex]
				100 & I    & $X_1$        & $\hat\theta$   &  & 0.0216     & 0.4468    & 0.4487    & 0.9470    &  & -0.0132    & 0.4801    & 0.4757    & 0.9460    \\
				&      &              & $\hat\theta_B$ &  & 0.0012     & 0.2033    & 0.1981    & 0.9365    &  & -0.0070    & 0.2115    & 0.2103    & 0.9495    \\
				&      &              & $\hat\theta_A$ &  & 0.0014     & 0.2034    & 0.1982    & 0.9390    &  & -0.0073    & 0.2145    & 0.2093    & 0.9420    \\
				&      & $X_1$, $D_2$ & $\hat\theta$   &  & -0.0102    & 0.3299    & 0.3306    & 0.9465    &  & -0.0070    & 0.3562    & 0.3497    & 0.9425    \\
				&      &              & $\hat\theta_B$ &  & -0.0023    & 0.2073    & 0.1980    & 0.9380    &  & -0.0051    & 0.2200    & 0.2096    & 0.9295    \\
				&      &              & $\hat\theta_A$ &  & -0.0020    & 0.2137    & 0.1999    & 0.9340    &  & -0.0352    & 0.7085    & 0.2280    & 0.9015    \\
				& II   & $X_1$        & $\hat\theta$   &  & -0.0076    & 0.4882    & 0.4903    & 0.9460    &  & -0.0131    & 0.5116    & 0.5141    & 0.9430    \\
				&      &              & $\hat\theta_B$ &  & -0.0073    & 0.4993    & 0.4841    & 0.9380    &  & -0.0143    & 0.5838    & 0.5617    & 0.9400    \\
				&      &              & $\hat\theta_A$ &  & -0.0072    & 0.4890    & 0.4895    & 0.9475    &  & -0.0159    & 0.4929    & 0.4938    & 0.9440    \\
				&      & $X_1$, $D_2$ & $\hat\theta$   &  & -0.0103    & 0.4869    & 0.4903    & 0.9485    &  & -0.0113    & 0.5099    & 0.5029    & 0.9420    \\
				&      &              & $\hat\theta_B$ &  & -0.0121    & 0.4990    & 0.4776    & 0.9375    &  & -0.0090    & 0.5470    & 0.5125    & 0.9270    \\
				&      &              & $\hat\theta_A$ &  & -0.0108    & 0.4908    & 0.4895    & 0.9450    &  & -0.0435    & 0.8393    & 0.5090    & 0.9325    \\
				& III  & $X_1$        & $\hat\theta$   &  & 0.0057     & 0.3864    & 0.3869    & 0.9445    &  & -0.0174    & 0.3722    & 0.3726    & 0.9475    \\
				&      &              & $\hat\theta_B$ &  & 0.0038     & 0.3348    & 0.3270    & 0.9380    &  & -0.0071    & 0.3754    & 0.3658    & 0.9425    \\
				&      &              & $\hat\theta_A$ &  & 0.0040     & 0.3314    & 0.3291    & 0.9420    &  & 0.0006     & 0.3435    & 0.3396    & 0.9420    \\
				&      & $X_1$, $D_2$ & $\hat\theta$   &  & -0.0104    & 0.3566    & 0.3569    & 0.9505    &  & -0.0078    & 0.3625    & 0.3569    & 0.9445    \\
				&      &              & $\hat\theta_B$ &  & 0.0067     & 0.3369    & 0.3234    & 0.9310    &  & 0.0082     & 0.3652    & 0.3451    & 0.9310    \\
				&      &              & $\hat\theta_A$ &  & -0.0012    & 0.3369    & 0.3289    & 0.9400    &  & -0.0388    & 1.0589    & 0.3615    & 0.9225    \\[0.5ex] 
				\hline
		\end{tabular}}
	\end{center}
\end{table}

\end{document}